\newcommand{\F}{\mathbb F}
\newcommand{\Tr}{\mathbf{Tr}_{\F_{q^{m}} \setminus \F_{q}}}
\newcommand{\Trt}{\mathbf{Tr}_{\F_{q^3} \setminus \F_{q}}}
\newtheorem{definition}{Definition}
\newtheorem{proposition}{Proposition}
\newtheorem{theorem}{Theorem}
\newtheorem{lemma}{Lemma}
\newtheorem{corollary}{Corollary}
\title{Improving the minimum distance bound of Trace Goppa codes}
\author{Isabel Byrne, Natalie Dodson, Ryan Lynch,  Eric Pab\'on and Fernando Pi\~nero}
\begin{document}
\maketitle

\begin{abstract}
  In this article we prove that a class of Goppa codes whose Goppa polynomial is of the form $g(x) = \Tr(x)$ (i.e. $g(x)$ is a trace polynomial from a field extension of degree $m \geq 3$) has a better minimum distance than what the Goppa bound $d \geq 2deg(g(x))+1$ implies. Our improvement is based on finding another Goppa polynomial $h$ such that $C(L,g) = C(M, h)$ but $deg(h) > deg(g)$. This is a significant improvement over Trace Goppa codes over quadratic field extensions (i.e. the case $m = 2$), as the Goppa bound for the quadratic case is sharp.  
\end{abstract}

\section{Introduction}\label{sec1}

Binary Goppa codes are one of the fundamental linear code constructions in Coding Theory. Binary Goppa codes have been extensively studied since their introduction by V.D. Goppa in \cite{Goppa-70}. Their rich algebraic structure, and good decoding capabilities make binary Goppa codes suitable candidates for cryptography applications. There are also Best Known Linear Codes constructions realized by binary Goppa codes.

Throughout this article we shall assume $q$ is a prime, $q = p^s$ for some natural number $s$, and $m \geq 3.$ We focus on binary Goppa codes where the defining polynomial $g(x)$ is of the form $g(x) = \Tr(x)$, or that is $$g(x)  = x +x^q +x^{q^2} + \cdots + x^{q^{m-1}}.$$

\begin{definition}\cite{Goppa-70}
Let $q = p^s$ be a prime power. Let $q^m$ be a power of $q$. Suppose $L = \{ \alpha_1, \alpha_2, \ldots, \alpha_n\} \subseteq \F_{q^m}$. Let $g(x)$ be a univariate polynomial of degree $t$ such that $g(\alpha_i) \neq 0 $, for $\alpha_i \in L$. The \emph{$p$--ary Goppa code} is defined as $$C(L, g) := \{ (c_1, c_2, \ldots, c_n) \in \F_{p}^n: \sum\limits_{i=1}^n \frac{c_i}{x-\alpha_i} \equiv 0 \mod g(x)     \}.$$
\end{definition}

Our dimension bound looks slightly different $dim C(L,g) \geq  n - mst$ than the classical dimension bound, $dim C(L,g) \geq  n - mt$. This is because in the classical definition, the set $L$ is defined over $\F_q$ where $q = p^m$ whereas in our case, the set $L$ is defined in $\F_{q^m}$ where $q^m = p^{ms}$. The reason for the different bound is that our polynomial $g(x) = \Trt$ is also defined taking into consideration the subfield $\F_q$ of $\F_{q^m}$. Thus there are two subfields to consider: the subfield in which $g(x)$ takes values and the subfield over $C(L,g)$ is defined. We shall take $L\subseteq \F_{q^m}$, $g$ will take values in the subfield $\F_q$ and $C(L,g)$ will be defined in $\F_p$. Goppa codes may also be defined as subcodes over any subfield of $\F_{q^m}$. However, Goppa codes over the prime subfield (and in particular the binary subfield) remain the most interesting. Although our results hold for any subfield $\F_{q_0} \subseteq \F_{q^m}$ for the sake of simplicity in this article Goppa codes are defined over the prime field $\F_p$. The following bound on the dimension of binary Goppa codes is well known:

\begin{proposition}\cite{Goppa-70}
Let $q = p^s$ be a prime power. Let $q^m$ be a power of $q$. Let $L = \{ \alpha_1, \alpha_2, \ldots, \alpha_n\} \subseteq \F_{q^m}$. Let $g(x)$ be a polynomial of degree $t$ such that $g(\alpha_i) \neq 0 $, for $\alpha_i \in L$.  Then the dimension  of $C(L, g)$ is at least $n- smt$ and the minimum distance of $C(L, g)$ is at least $t +1$.
\end{proposition}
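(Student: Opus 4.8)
The plan is to treat the defining congruence as one linear system over $\F_{q^m}$ and then descend to $\F_p$. First I would record the standard identity that linearizes the Goppa condition: since $g(\alpha_i)\neq 0$, the polynomial $x-\alpha_i$ is a unit modulo $g(x)$, and in fact
\[
\frac{1}{x-\alpha_i}\equiv -\,g(\alpha_i)^{-1}\,\frac{g(x)-g(\alpha_i)}{x-\alpha_i}\pmod{g(x)},
\]
where $\dfrac{g(x)-g(\alpha_i)}{x-\alpha_i}$ is an honest polynomial of degree $t-1$. Hence a vector $c=(c_1,\dots,c_n)$ lies in $C(L,g)$ if and only if
\[
\sum_{i=1}^n c_i\,g(\alpha_i)^{-1}\,\frac{g(x)-g(\alpha_i)}{x-\alpha_i}=0
\]
as a polynomial identity (the left-hand side already has degree $<t$, so vanishing modulo $g$ is the same as vanishing).

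For the dimension bound I would read this displayed identity coefficient by coefficient: it is a system of $t$ linear equations in $c_1,\dots,c_n$ with coefficients in $\F_{q^m}$, i.e. the kernel of a $t\times n$ matrix $H$ over $\F_{q^m}$. Since $C(L,g)$ consists of the solutions lying in $\F_p^n$, fixing an $\F_p$-basis of $\F_{q^m}$ — which has dimension $[\F_{q^m}:\F_p]=sm$ because $q=p^s$ — expands $H$ into an $smt\times n$ matrix over $\F_p$ whose $\F_p$-kernel is exactly $C(L,g)$. Its rank is at most $smt$, so $\dim_{\F_p}C(L,g)\ge n-smt$. This is precisely the point where the subfield bookkeeping emphasized in the introduction enters: the syndrome lives in an $\F_{q^m}$-space while the codewords live over the prime field, which produces the factor $sm$ rather than $m$.

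For the minimum distance, suppose $c\in C(L,g)$ is nonzero with support $S=\{i:c_i\neq 0\}$ and $|S|=w$. Clearing denominators in $\sum_{i\in S}c_i/(x-\alpha_i)\equiv 0\pmod{g}$ — legitimate since $\prod_{i\in S}(x-\alpha_i)$ is coprime to $g$ — gives
\[
\sum_{i\in S}c_i\prod_{\substack{j\in S\\ j\neq i}}(x-\alpha_j)\equiv 0\pmod{g(x)}.
\]
The left-hand side is a polynomial of degree at most $w-1$. If $w\le t$ then it has degree strictly less than $t=\deg g$, so divisibility by $g$ forces it to be the zero polynomial; evaluating at $x=\alpha_k$ for any $k\in S$ kills every term except $i=k$ and yields $c_k\prod_{j\in S,\,j\neq k}(\alpha_k-\alpha_j)=0$, whence $c_k=0$ because the $\alpha_j$ are distinct — a contradiction. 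Therefore every nonzero codeword has weight at least $t+1$, i.e. $d(C(L,g))\ge t+1$.

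I do not expect a serious obstacle here; the only points needing care are verifying that $\dfrac{g(x)-g(\alpha_i)}{x-\alpha_i}$ genuinely has degree $t-1$ (so that the extracted system has the asserted $t$ equations) and the elementary observation that a polynomial of degree smaller than $\deg g$ divisible by $g$ must vanish identically. The part most worth spelling out in the write-up is the field-degree bookkeeping underlying the $n-smt$ bound.
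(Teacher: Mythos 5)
Your proof is correct; note that the paper itself states this classical result with only a citation to Goppa and no proof, relying on the same standard reasoning it sketches around Propositions \ref{prop:GoppaPCE} and the alternant bound (namely $t$ syndrome equations over $\F_{q^m}$, expanded by the factor $[\F_{q^m}:\F_p]=sm$ for the dimension, and $t$ consecutive parity checks forcing weight at least $t+1$). Your linearization via $\frac{g(x)-g(\alpha_i)}{x-\alpha_i}$ and the clear-denominators degree argument are just an explicit, self-contained version of that same approach, so there is nothing to fix.
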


One of the first improvements on the bounds of Binary Goppa codes was given by Goppa in \cite{Goppa-70}. This improvement is based on establishing that two different Goppa polynomials give the same binary Goppa codes. This allows to use one polynomial to bound the dimension of the code and another polynomial to bound the minimum distance of the code.

\begin{proposition}\cite{Goppa-70}
Let $q = 2^s$. Let $L = \{ \alpha_1, \alpha_2, \ldots, \alpha_n\}$ be a subset of $\F_{q^m}$. Let $g(x)$ be a squarefree polynomial of degree $t$ such that $g(\alpha_i) \neq 0 $, for $\alpha_i \in L$. Then the binary Goppa codes satisfy: $$C(L,g) = C(L, g^2).$$
\end{proposition}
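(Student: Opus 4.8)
The plan is to prove the classical fact that for a squarefree Goppa polynomial $g$ over a field of characteristic $2$, one has $C(L,g)=C(L,g^2)$. The inclusion $C(L,g^2)\subseteq C(L,g)$ is immediate, since if $\sum_i \frac{c_i}{x-\alpha_i}\equiv 0 \bmod g^2$ then certainly $\sum_i \frac{c_i}{x-\alpha_i}\equiv 0 \bmod g$ (as $g \mid g^2$). So the entire content is the reverse inclusion: a binary codeword that is a syndrome-zero vector modulo $g$ is automatically syndrome-zero modulo $g^2$ when $g$ is squarefree.

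The key device is to introduce, for a codeword $c=(c_1,\dots,c_n)$, the ``error locator'' polynomial $\sigma_c(x) = \prod_{i : c_i \neq 0}(x-\alpha_i)$. First I would record the standard identity that, working modulo $g$, the syndrome condition $\sum_i \frac{c_i}{x-\alpha_i}\equiv 0$ is equivalent to $g(x) \mid \sigma_c'(x)$, where $\sigma_c'$ is the formal derivative: indeed $\sum_i \frac{c_i}{x-\alpha_i}$, after clearing denominators against $\sigma_c$, equals $\sigma_c'(x)/\sigma_c(x)$ over $\F_2$ (here one uses $c_i \in \F_2$, so $c_i \in \{0,1\}$, which is exactly why $\sum_{i:c_i\neq 0}\prod_{j\neq i, c_j \neq 0}(x-\alpha_j) = \sigma_c'(x)$), and $\gcd(\sigma_c, g)=1$ because $g(\alpha_i)\neq 0$ for all $i$. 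Hence $c \in C(L,g)$ iff $g \mid \sigma_c'$.

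Now comes the characteristic-$2$ miracle: in characteristic $2$, the derivative $\sigma_c'(x)$ is a polynomial in $x^2$, i.e. $\sigma_c'(x) = \tau(x^2)$ for some polynomial $\tau$, so $\sigma_c'$ is a perfect square, $\sigma_c'(x) = \rho(x)^2$. Therefore if the squarefree polynomial $g$ divides $\rho(x)^2$, then since $g$ is squarefree every irreducible factor of $g$ divides $\rho$, so in fact $g \mid \rho$, and consequently $g^2 \mid \rho^2 = \sigma_c'$. Running the syndrome-derivative equivalence in the other direction (now modulo $g^2$, again using $\gcd(\sigma_c,g^2)=1$) gives $c \in C(L,g^2)$. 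This establishes $C(L,g)\subseteq C(L,g^2)$ and hence equality.

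The main obstacle — really the only subtle point — is justifying the identity $\sum_i \frac{c_i}{x-\alpha_i}\equiv \sigma_c'(x)/\sigma_c(x)$ and the resulting equivalence ``$c\in C(L,g) \iff g\mid \sigma_c'$'' cleanly, including the fact that $\sigma_c$ is invertible modulo $g$ (from $g(\alpha_i)\neq 0$) so that one may legitimately multiply the congruence by $\sigma_c$. I would also make explicit the step ``$g$ squarefree and $g \mid \rho^2 \Rightarrow g \mid \rho$'', which is just unique factorization in $\F_q[x]$. Everything else is formal manipulation in $\F_q[x]/(g)$ and $\F_q[x]/(g^2)$; no estimates are involved.
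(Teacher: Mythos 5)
Your proposal is correct. Note, however, that the paper does not prove this proposition at all: it is stated as a known result and cited to Goppa, so there is no internal proof to compare against. What you give is precisely the classical argument (due to Goppa, and appearing in Berlekamp's treatment): identify the syndrome $\sum_i c_i/(x-\alpha_i)$ with $\sigma_c'/\sigma_c$ for the locator polynomial $\sigma_c$, use $\gcd(\sigma_c,g)=1$ to convert membership in $C(L,g)$ into the divisibility $g \mid \sigma_c'$, and then exploit that in characteristic $2$ the derivative $\sigma_c'$ is a polynomial in $x^2$, hence a perfect square $\rho^2$ (here you should remark explicitly that $\F_{q^m}$ is perfect, so the coefficients of $\tau$ have square roots and $\tau(x^2)=\rho(x)^2$ really holds), so squarefreeness of $g$ upgrades $g\mid\rho^2$ to $g\mid\rho$ and thus $g^2\mid\sigma_c'$. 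Together with the trivial inclusion $C(L,g^2)\subseteq C(L,g)$ this gives equality, and your attention to the invertibility of $\sigma_c$ modulo $g$ and modulo $g^2$ is exactly the point that needs care. The argument is complete as sketched.
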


This proposition improves the distance bound from $t+1$ to $2t+1$. The distance bound on the Goppa code $C(L, g)$ comes from the fact that the codewords of $C(L, g)$ satisfy certain special parity check equations. Sugiyama et. al generalize this equivalence between Goppa codes over arbitrary fields $\F_{q_0}$.

\begin{proposition}\cite{SKHN-76}
Let $q^m$ be a prime power. Let $\F_{q_0}$ be a subfield of $\F_{q^m}$. Let $L = \{ \alpha_1, \alpha_2, \ldots, \alpha_n\}$ be a subset of $\F_{q^m}$. Let $g(x)$ be a squarefree polynomial of degree $t$ such that $g(\alpha_i) \neq 0 $, for $\alpha_i \in L$. Then the Goppa codes defiend over $\F_{q_0}$ satisfy: $$C(L,g^{q_0-1}) = C(L, g^{q_0}).$$
\end{proposition}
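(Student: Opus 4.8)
The plan is to establish the only nontrivial inclusion, $C(L,g^{q_0-1})\subseteq C(L,g^{q_0})$; the reverse is immediate since $g^{q_0-1}\mid g^{q_0}$, so a congruence modulo $g^{q_0}$ is in particular one modulo $g^{q_0-1}$. First I would reformulate membership in polynomial terms. Put $\ell(x)=\prod_{i=1}^{n}(x-\alpha_i)$ and $\eta(x)=\sum_{i=1}^{n}c_i\prod_{j\ne i}(x-\alpha_j)$, so that $\sum_i \frac{c_i}{x-\alpha_i}=\eta(x)/\ell(x)$ as rational functions, and observe $\gcd(\ell,g)=1$ because $g(\alpha_i)\ne 0$ for all $i$. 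Then $\ell$ is a unit modulo every power of $g$, and hence $c\in C(L,g^{k})\iff g(x)^{k}\mid \eta(x)$ for every $k\ge 1$. This reduces the proposition to the purely polynomial assertion: if $g$ is squarefree and $g^{q_0-1}\mid\eta$, then $g^{q_0}\mid\eta$. (If $\eta=0$ there is nothing to prove, so assume $\eta\ne 0$.)

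Next I would work over a splitting field $\mathbb F\supseteq\F_{q^m}$ of $g$. Since $g$ is squarefree and $\F_{q^m}$ is perfect, $g$ factors over $\mathbb F$ as a constant times $\prod_\beta(x-\beta)$ with the roots $\beta$ pairwise distinct; consequently $g^{q_0}$ is (up to a constant) $\prod_\beta(x-\beta)^{q_0}$, and as the factors are pairwise coprime it suffices to show $(x-\beta)^{q_0}\mid\eta$ for each root $\beta$ of $g$. Fix such a $\beta$, and note $\beta\ne\alpha_i$ for all $i$. Because $\ell(\beta)\ne 0$, the order of vanishing of $\eta$ at $\beta$ equals that of $R(x):=\sum_i \frac{c_i}{x-\alpha_i}$, so the hypothesis $g^{q_0-1}\mid\eta$ says precisely that $R$ vanishes to order at least $q_0-1$ at $\beta$. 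Expanding each $\frac{c_i}{x-\alpha_i}$ as a power series in $(x-\beta)$ shows that the coefficient of $(x-\beta)^{j}$ in $R$ is $(-1)^{j}P_{j+1}$, where $P_{r}:=\sum_{i=1}^{n}c_i(\beta-\alpha_i)^{-r}$. Hence $P_1=P_2=\cdots=P_{q_0-1}=0$.

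The crux is then a one-line Frobenius argument. Since $q_0$ is a power of the characteristic $p$ and each $c_i$ lies in $\F_{q_0}\subseteq\F_{q^m}\subseteq\mathbb F$, raising to the $q_0$-th power is a field endomorphism of $\mathbb F$ that fixes every $c_i$; therefore
\[
P_{q_0}=\sum_{i=1}^{n}c_i(\beta-\alpha_i)^{-q_0}=\Bigl(\sum_{i=1}^{n}c_i(\beta-\alpha_i)^{-1}\Bigr)^{q_0}=P_1^{\,q_0}=0 .
\]
So the coefficient of $(x-\beta)^{q_0-1}$ in $R$ also vanishes, i.e.\ $R$, and with it $\eta$, vanishes to order at least $q_0$ at $\beta$. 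Since $\beta$ was an arbitrary root of the squarefree polynomial $g$, we conclude $g^{q_0}\mid\eta$, that is, $c\in C(L,g^{q_0})$, which is the desired inclusion.

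The argument is short; the only steps requiring care are the bookkeeping that turns the congruence $\sum_i \frac{c_i}{x-\alpha_i}\equiv 0 \bmod g^{k}$ into the clean divisibility $g^{k}\mid\eta$ (tracking that $\ell$ is a unit modulo $g^{k}$), and the identification of the Taylor coefficients of $R$ at the roots of $g$ with the power sums $P_{r}$. Squarefreeness of $g$ is used exactly once, to recombine the local divisibilities at the distinct roots into $g^{q_0}\mid\eta$. I do not expect a real obstacle: the substance of the proof is the identity $P_{q_0}=P_1^{\,q_0}$, and that is precisely where the hypothesis $c_i\in\F_{q_0}$ is consumed.
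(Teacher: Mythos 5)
Your proof is correct. Note, however, that the paper itself does not prove this proposition: it is quoted from the Sugiyama--Kasahara--Hirasawa--Namekawa reference without proof, so there is no in-paper argument to compare against. Your route --- reduce membership in $C(L,g^{k})$ to the divisibility $g^{k}\mid\eta$ with $\eta(x)=\sum_i c_i\prod_{j\ne i}(x-\alpha_j)$, pass to a splitting field where squarefreeness (plus perfectness of finite fields) gives distinct roots, identify the Taylor coefficients of $\sum_i c_i/(x-\alpha_i)$ at a root $\beta$ with the power sums $P_r=\sum_i c_i(\beta-\alpha_i)^{-r}$, and close with $P_{q_0}=P_1^{q_0}=0$ --- is a clean and complete argument; each step (the unit status of $\prod_i(x-\alpha_i)$ modulo $g^{k}$, the local expansion, the descent of divisibility from the splitting field back to $\F_{q^m}[x]$, and the Frobenius identity using $c_i\in\F_{q_0}$) checks out, and you correctly locate where squarefreeness and the subfield hypothesis are consumed. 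The classical proof in the cited source generalizes Goppa's logarithmic-derivative argument for the binary case ($g$ squarefree $\Rightarrow C(L,g)=C(L,g^2)$) rather than arguing root-by-root with power sums; your local Frobenius argument is an equally valid and arguably more transparent alternative, and it has the small added benefit of making explicit that only the vanishing of $P_1$ is needed to push the order of vanishing from $q_0-1$ to $q_0$ at each root.
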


\begin{definition}
Let $L = \{\alpha_1, \alpha_2, \ldots, \alpha_n \} \subseteq \F_{q^m}$ where $\# L = n$. Let $f(x) \in \F_{q^m}[X]$ be a polynomial. We define the evaluation map $ev$ as  $$ev_L: \F_{q^m}[X] \rightarrow \F_{q^m}^n, ev_L(f) = (f(\alpha_1), f(\alpha_2) \ldots, f(\alpha_n)).$$
\end{definition}

The map $ev$ is a linear map from the polynomial ring $\F_{q^m}[X]$ to the vector space $\F_{q^m}^n$. The kernel is $ker(ev) = \langle \prod\limits_{i=1}^n(X-\alpha_i) \rangle$. We find it more illustrating to work with $f(x) \in \F_{q^m}[X]$ to understand the parity check equations. From the definition of Goppa codes it follows that the parity check equations for $C(L, g)$ may also be written as evaluation maps $ev_L(f)$. We describe those parity check equations as follows.

\begin{proposition}\cite{Goppa-70}\label{prop:GoppaPCE}
Let $q$ be a prime power. Let $L = \{ \alpha_1, \alpha_2, \ldots, \alpha_n\} \subseteq \F_q$. Let $g(x)$ be a polynomial of degree $t$ such that $g(\alpha_i) \neq 0 $, for $\alpha_i \in L$.  Then any codeword $c = (c_1, c_2, \ldots, c_n) \in C(L,g)$ satisfies

$$  \sum\limits_{i=1}^n c_i \frac{\alpha_i^j}{g(\alpha_i)} =  c \cdot ev_L\left(\frac{X^j}{g(X)}\right) = 0 \makebox{ where } 0 \leq j \leq t - 1$$
\end{proposition}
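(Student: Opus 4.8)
The idea is to convert the congruence that defines $C(L,g)$ into an \emph{exact} polynomial identity, and then read off the claimed equations by comparing coefficients. Since $g(\alpha_i)\neq 0$, the linear polynomial $x-\alpha_i$ is invertible in $\F_{q^m}[x]/(g(x))$, and a one-line check gives
$$\frac{1}{x-\alpha_i}\equiv -\frac{1}{g(\alpha_i)}\cdot\frac{g(x)-g(\alpha_i)}{x-\alpha_i}\pmod{g(x)},$$
because multiplying the right-hand side by $x-\alpha_i$ yields $1-g(x)/g(\alpha_i)\equiv 1$. The point to exploit is that $\frac{g(x)-g(\alpha_i)}{x-\alpha_i}$ is an honest polynomial of degree $t-1$. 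Hence, for a codeword $c=(c_1,\dots,c_n)\in C(L,g)$, the polynomial
$$P_c(x):=\sum_{i=1}^n\frac{c_i}{g(\alpha_i)}\cdot\frac{g(x)-g(\alpha_i)}{x-\alpha_i}$$
has degree at most $t-1$, while by definition of the Goppa code $P_c(x)\equiv 0\pmod{g(x)}$. A nonzero polynomial divisible by $g$ has degree at least $\deg g=t$, so $P_c(x)$ must be the zero polynomial in $\F_{q^m}[x]$.

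Next I would extract coefficients. Writing $g(x)=\sum_{k=0}^{t}g_k x^k$ with $g_t\neq 0$, the standard expansion $\frac{g(x)-g(\alpha)}{x-\alpha}=\sum_{j=0}^{t-1}\big(\sum_{k=j+1}^{t}g_k\,\alpha^{\,k-1-j}\big)x^j$, applied with $\alpha=\alpha_i$ and summed over $i$, gives for the coefficient of $x^j$ in $P_c$
$$[x^j]\,P_c(x)=\sum_{k=j+1}^{t}g_k\,S_{k-1-j}=0,\qquad 0\le j\le t-1,$$
where I abbreviate $S_\ell:=\sum_{i=1}^n\frac{c_i\alpha_i^{\ell}}{g(\alpha_i)}=c\cdot ev_L\!\big(X^{\ell}/g(X)\big)$ for $0\le\ell\le t-1$ (the vector $ev_L(X^\ell/g(X))$ being well defined precisely because $g(\alpha_i)\neq 0$). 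These $t$ equations form a triangular linear system in the unknowns $S_0,\dots,S_{t-1}$ whose diagonal entries are all equal to $g_t\neq 0$: the case $j=t-1$ reads $g_tS_0=0$, the case $j=t-2$ reads $g_{t-1}S_0+g_tS_1=0$, and so on.

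Finally I would solve this system by back-substitution: $S_0=0$, hence $S_1=0$, hence $S_2=0$, and inductively $S_\ell=0$ for every $0\le\ell\le t-1$, which is exactly the asserted identity $c\cdot ev_L(X^{j}/g(X))=0$ for $0\le j\le t-1$. The same coefficient computation yields the converse as well — if all the $S_\ell$ vanish then every coefficient of $P_c$ vanishes, so $\sum_i c_i/(x-\alpha_i)\equiv 0\pmod{g}$ — so these equations in fact characterize $C(L,g)$ and furnish its parity-check matrix. I do not expect a real obstacle here; the only steps needing care are the degree count that upgrades $P_c\equiv 0\ (\mathrm{mod}\ g)$ to $P_c=0$, and the remark that $g_t\neq 0$ (i.e. that $\deg g=t$) is exactly what makes the triangular system uniquely solvable.
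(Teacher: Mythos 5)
Your proof is correct and complete: the reduction $\frac{1}{x-\alpha_i}\equiv -\frac{1}{g(\alpha_i)}\cdot\frac{g(x)-g(\alpha_i)}{x-\alpha_i}\pmod{g(x)}$, the degree count upgrading $P_c\equiv 0\pmod g$ to $P_c=0$, and the triangular system with diagonal $g_t\neq 0$ together give exactly the stated parity checks. The paper itself gives no proof (it cites the result to Goppa), and your argument is the standard classical derivation of these parity-check equations, so there is nothing to reconcile.
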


Goppa codes belong to a class of codes known as Alternant Codes. Alternant codes are subfield subcodes of Generalized Reed--Solomon codes. One of the good things about Alternant codes is that one can get a bound on its minimum distance at follows:

\begin{proposition} 
Let $q = p^s$. Let $\alpha_1, \alpha_2, \ldots, \alpha_n$ be distinct elements in $\F_{q^m}$. Let $\mathbf{a} = (a_1, a_2, \ldots, a_n)$ be nonzero elements in $\F_{q^m}.$ Let $\delta$ be a positive integer. Let $C$ be a code of length $n$ over $\F_p$.

If $\sum\limits_{i=1}^n c_ia_i \alpha_i^j = 0$ for $0 \leq j \leq \delta-2$ and $(c_1, c_2, \cdots, c_n) \in C$ then the minimum distance of $C$ is at least $\delta$.\end{proposition}

Goppa codes are Alternant codes where $a_i = g(\alpha_i)^{-1}$. The classical Goppa distance bound comes from the consecutive powers from $j=0$ to $j= \deg(g)-1$. Our distance bound improvements come from finding more consecutive powers which are parity check equations for $C(L, \Tr(x))$.

Goppa codes are linear codes defined over a small field, $\F_p$. However the parity check equations describing the Goppa codes are defined over the larger field $\F_{q^m}$. For $x = (x_1, x_2, \ldots, x_n) \in \F_{q^m}^n$ denote by $$x^{(p^i)} = (x_1^{p_i}, x_2^{p_i}, \ldots, x_n^{o_i}) \in \F_{q^m}^n .$$ Note that if $c \in C(L, g)$ and $c \cdot ev_L\left(\frac{X^j}{g(X)}\right) = 0$ then $c^{(p^i)} \cdot ev_L\left(\frac{X^j}{g(X)}\right)^{(p^i)} = 0$. As $c \in \F_p$ it follows that $c^{(p^i)} = c$. Thus for each $p$--power, we get the additional parity check equations $c \cdot ev_L\left(\left(\frac{X^j}{g(X)}\right)^{p^i}\right) = 0$. As $q^m = p^{ms}$ and there are $ms$ different $p$--powers, this is how the dimension bound $\dim(C(L,g)) \geq n-mst$ is derived.  Recall that the trace function $\Tr(\alpha)$ takes values in the subfield $\F_q$ for any $\alpha \in \F_{q^m}$. This implies that $\Tr(\alpha)^q = \Tr(\alpha)$ and also that $ev_{L}\left(\frac{X^i}{g(X)^q} \right) = ev_{L}\left(\frac{X^i}{g(X)} \right) $. This fact will be important later when we prove that certain $p$--powers of evaluation vectors $ev_{L}\left(\frac{X^i}{g(X)^q} \right) $ and $ev_{M}\left(\frac{Y^j}{h(Y)^q} \right) $ are in the dual codes $C(L, g)^\perp$ or $C(M,h)^\perp$.

P. V\'eron has improved bounds on the dimension of Trace Goppa codes. In fact his bounds are sharp for $m=2$. S. Bezzatev and N. Shekhunova in \cite{BS-09} proved that the classical distance bound is sharp for $m = 2$. We improve the minimum distance for trace Goppa codes when $m \geq 3$ instead.

\section{Improving the Minimum Distance of Trace Goppa Codes}\label{sec2}

We improve the minimum distance bound of Trace Goppa codes with $g(x) = \Tr(x)$ by establishing that the Goppa code $C(L, \Tr(x))$ is equivalent to the Trace Goppa code $C(M, h(y))$ where $$h\left(y\right) = \Tr\left(y^{a}\right) \mod y^{q^m}-y$$ and $a = 1+q+\cdots + q^{m-2}$. If $b =  1+q+\cdots + q^{m-1}$ then $$h\left( y \right) = y^{b-1} + y^{b-q} + y^{b-q^2} + \cdots + y^{b-q^{m-1}}.$$ The degree of $h(y)$ is $b-1 = aq$.  Denote by $$L = \{ \alpha \in \F_{q^m} : \Tr(\alpha)  \neq 0 \}$$ and denote by $$M = \{\beta \in \F_{q^m} \ : \ h(\beta) \neq 0 \}.$$

Now we prove that $$\alpha \in L \makebox{ if and only if } \alpha^{-1} \in M. $$

\begin{lemma}
Let $\alpha \in \F_{q^m}$. Then $\alpha \in L$ if and only if $\alpha^{-1} \in M$ 

\end{lemma}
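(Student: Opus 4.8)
The plan is to compute $h(\alpha^{-1})$ directly and relate it to $\mathbf{Tr}_{\F_{q^m}\setminus\F_q}(\alpha)$, showing that one vanishes exactly when the other does. Recall $h(y) = \mathbf{Tr}_{\F_{q^m}\setminus\F_q}(y^{a}) \bmod y^{q^m}-y$ with $a = 1 + q + \cdots + q^{m-2}$, and that for $\beta \in \F_{q^m}^\times$ reduction modulo $y^{q^m}-y$ just means we may use $\beta^{q^m} = \beta$. So for $\alpha \in \F_{q^m}^\times$ we have $h(\alpha^{-1}) = \sum_{i=0}^{m-1} (\alpha^{-a})^{q^i} = \sum_{i=0}^{m-1}\alpha^{-aq^i}$. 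The key arithmetic fact is that $a\cdot q^{m-1} \equiv$ (something nice) relative to $b = 1 + q + \cdots + q^{m-1}$: indeed $aq = b - 1$, and more usefully $b \equiv 0$ is false, but $q^m \equiv 1 \pmod{q-1}$ forces $b \equiv m \pmod{q-1}$; the cleaner route is to observe $aq^i$ ranges over the "missing one digit" sums. Precisely, $a = b - q^{m-1}$, hence $aq^i = bq^i - q^{m-1+i} \equiv bq^i - q^{(m-1+i)\bmod m}$, and since $\alpha^{b} \in \F_q$... wait — actually $\alpha^{b} = \alpha^{1+q+\cdots+q^{m-1}} = N_{\F_{q^m}\setminus\F_q}(\alpha) =: \nu \in \F_q^\times$.

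So here is the crisp computation I would carry out. Write $aq^i = bq^i - q^{m-1+i}$. Then
$$
\alpha^{-aq^i} = \alpha^{-bq^i}\,\alpha^{q^{m-1+i}} = \nu^{-q^i}\,\alpha^{q^{m-1+i}} = \nu^{-1}\,\alpha^{q^{m-1+i}},
$$
using $\nu \in \F_q^\times$ so $\nu^{q^i} = \nu$, and $\alpha^{bq^i}=(\alpha^b)^{q^i}=\nu^{q^i}=\nu$. Now sum over $i = 0, \ldots, m-1$: as $i$ runs over a complete residue system mod $m$, so does $m - 1 + i$, hence
$$
h(\alpha^{-1}) \;=\; \sum_{i=0}^{m-1}\alpha^{-aq^i} \;=\; \nu^{-1}\sum_{i=0}^{m-1}\alpha^{q^{m-1+i}} \;=\; \nu^{-1}\sum_{j=0}^{m-1}\alpha^{q^{j}} \;=\; \nu^{-1}\,\mathbf{Tr}_{\F_{q^m}\setminus\F_q}(\alpha).
$$
Since $\nu^{-1} \neq 0$, this gives $h(\alpha^{-1}) \neq 0 \iff \mathbf{Tr}_{\F_{q^m}\setminus\F_q}(\alpha) \neq 0$, i.e. $\alpha^{-1} \in M \iff \alpha \in L$. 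Note $0 \notin L$ and $0 \notin M$ (as $h(0)=0$), so the correspondence $\alpha \mapsto \alpha^{-1}$ is well-defined on the relevant domains, and this completes the argument.

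The only real obstacle is bookkeeping with the exponents modulo $q^m - 1$: one must be careful that $\alpha^{-aq^i}$ is interpreted with exponents reduced mod $q^m-1$ (equivalently, that we are evaluating the reduced polynomial $h$), and that the identity $aq^i = bq^i - q^{m-1+i}$ together with $\alpha^b = \nu \in \F_q$ is applied consistently — in particular checking that $\nu \neq 0$, which holds because $\alpha \neq 0$ makes its norm nonzero. Everything else is the elementary observation that cycling the Frobenius exponents by $m-1$ permutes the trace summands. I would also explicitly note at the start that both $L$ and $M$ exclude $0$, so that "$\alpha \in L \iff \alpha^{-1}\in M$" is a statement about a genuine bijection of subsets of $\F_{q^m}^\times$.
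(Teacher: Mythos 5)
Your proof is correct and is essentially the same argument as the paper's: both hinge on the identity $h(\alpha^{-1}) = \alpha^{-b}\,\Tr(\alpha)$ for $\alpha \neq 0$, the paper obtaining it by dividing $\Tr(\alpha)$ by $\alpha^{b}$ and collecting exponents, you by writing $aq^{i} = bq^{i} - q^{m-1+i}$ and factoring out the norm $\nu = \alpha^{b} \in \F_q^{\times}$. The extra observation that $\nu$ is the norm (hence fixed by Frobenius) is a cosmetic variation, not a different route, and your attention to excluding $0$ matches the paper's remark that $\Tr(0)=0$.
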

\begin{proof}
Let $\alpha \in L$. This implies that $\Tr(\alpha) \neq 0$. As $\Tr(0) = 0$, this implies $\alpha \neq 0.$ Therefore we divide $\Tr(\alpha)$ by $\alpha^{1+q+\cdots + q^{m-1}}$ and obtain $\frac{\Tr(\alpha)}{\alpha^{1+q+\cdots + q^{m-1}}} \neq 0$.

This implies $\frac{\alpha + \alpha^q + \cdots + \alpha^{q{^{m-2}}}}{\alpha^{1+q+\cdots + q^{m-1}}} \neq 0$. We rewrite the sum as $\frac{\alpha}{\alpha^{b}} + \frac{\alpha^q}{\alpha^{b}} + \cdots + \frac{\alpha^{q^{m-1}}}{\alpha^{b }}  \neq 0  $ where $b = 1+q+q^2+\cdots + q^{m-1}$. Collecting the different powers we obtain $\alpha^{-a} + \alpha^{-aq } + \cdots + \alpha^{-aq^{m-1}} \neq 0$ which implies $h(\alpha^{-1}) \neq 0$ and thus $\alpha^{-1} \in M$.

Now suppose that $\alpha^{-1} \in M$. Then $h(\alpha^{-1}) \neq 0$. From the definiton of $h$, it follows $$h(\alpha^{-1}) = \alpha^{-a} + \alpha^{-aq } + \cdots + \alpha^{-aq^{m-1}} \neq 0. $$  We rewrite the sum as $\frac{\alpha}{\alpha^{b}} + \frac{\alpha^q}{\alpha^{b}} + \cdots + \frac{\alpha^{q^{m-1}}}{\alpha^{b }}  \neq 0 .$ As $\alpha \neq 0$ it follows that $\alpha +\alpha^q +\cdots + \alpha^{q^{m-1}}\neq 0$ which implies $\alpha \in L$.\end{proof}

We've established a relation between elements which are not roots of $g$ and the elements which are not roots of $h$. Now we describe the relations amongst the parity check equations for $C(L, g)$ and $C(M,  h)$. As stated in Proposition \ref{prop:GoppaPCE} the parity check equations for $C(L, g)$ are vectors of the form $ev_L\left(\frac{X^i}{g(X)}\right)$ for $0 \leq i \leq q^{m-1}-1$ and the corresponding $ms$ $p$--powers for each of the vectors. Likewise the parity check equations for $C(M, h)$ are generated from parity check equations of the form $ev_M\left(\frac{Y^j}{h(Y)}\right)$ for $0 \leq i \leq q^{m-1}+q^{m-2} + \cdots + q$ and all their $p$--powers. Luckily, the trace polynomial $\Tr(X)$ takes values in the subfield $\F_q$. This implies there are relations among the different $p$--powers of the parity check equations. For example $ev_L\left(\frac{X^i}{g(X)}\right)^{(q) }= ev_{L}\left(\frac{X^{qi}}{g(X)}\right)$ and $ev_M\left(\frac{Y^j}{h(Y)}\right)^{(q) }= ev_{M}\left(\frac{Y^{qj}}{h(Y)}\right)$. P. V\'eron (\cite{Veron-01}) used these relations to improve the dimension bound from $n - msq^{m-1}$ to $n - (m-1)sq^{m-1}$. We find relations between the different parity check equations to improve the distance bounds. We begin with the following lemma.
\begin{lemma}

Suppose $0 \leq i \leq q^m-2$. Assume the $q$--ary expansion of $i = \sum\limits_{s=0}^{m-1}i_rq^r$ where $0 \leq i_r \leq q-1$. Then $qi \mod q^m-1 = \sum\limits_{r=0}^{m-1}i_{r-1}q^s $.
\end{lemma}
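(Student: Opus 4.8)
The plan is to compute $qi$ directly from its $q$--ary expansion, peel off the top digit (which contributes a $q^m$ term), reduce $q^m \equiv 1 \pmod{q^m-1}$, and then check that the resulting integer already lies in the range $\{0,1,\dots,q^m-2\}$ in which the reduction ``$\bmod\ q^m-1$'' is taken, so that it is literally equal to $qi \bmod (q^m-1)$ and not merely congruent to it.

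Concretely, starting from $i = \sum_{r=0}^{m-1} i_r q^r$ I would write
\[ qi = \sum_{r=0}^{m-1} i_r q^{r+1} = i_{m-1}q^m + \sum_{r=1}^{m-1} i_{r-1}q^r . \]
Since $q^m \equiv 1 \pmod{q^m-1}$, this yields $qi \equiv i_{m-1} + \sum_{r=1}^{m-1} i_{r-1}q^r \pmod{q^m-1}$, and the right-hand side is exactly $\sum_{r=0}^{m-1} i_{r-1}q^r$ under the cyclic indexing convention $i_{-1} := i_{m-1}$ --- i.e.\ precisely the claimed cyclic shift of the digit string.

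It then remains to verify that $N := \sum_{r=0}^{m-1} i_{r-1}q^r$ is the canonical representative of its class modulo $q^m-1$. Plainly $N \ge 0$, and $N \le \sum_{r=0}^{m-1}(q-1)q^r = q^m-1$, with equality precisely when every digit $i_r$ equals $q-1$, that is, when $i = q^m-1$. The hypothesis $i \le q^m-2$ excludes this case, so $N \le q^m-2$, and therefore $N = qi \bmod (q^m-1)$, which is the assertion (the cases $i=0$ and small $i$ are handled uniformly by the same computation).

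The only delicate point --- and the nearest thing to an obstacle --- is this boundary case: without the hypothesis $i \le q^m-2$ the shifted expression could equal $q^m-1$, which is congruent to $0$ but is not the reduced residue, so the equality of integers would fail; the hypothesis is exactly what rules this out. Beyond that, the argument is just a finite reindexing of a sum together with one congruence, and I expect no further difficulty.
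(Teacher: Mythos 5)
Your proof is correct. The paper actually states this lemma without any proof at all (it is treated as a standard fact about multiplication by $q$ acting as a cyclic shift of $q$--ary digits modulo $q^m-1$), so there is no argument to compare against; your computation --- peel off the top digit, use $q^m \equiv 1 \pmod{q^m-1}$, and then check that the shifted integer is at most $q^m-2$ so that it is the canonical residue --- is exactly the expected argument, and your observation that the hypothesis $i \leq q^m-2$ is precisely what excludes the all-digits-equal-to-$q-1$ boundary case is the one genuinely non-trivial point, handled correctly.
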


The parity check equations for $C(L,g)$ are $ev_L\left(\frac{X^i}{g(X)}\right)$ for $0 \leq i \leq q^{m-1}-1$ and their $p$--powers. We use the fact that $g(\alpha)^q = g(\alpha)$ for $\alpha \in \F_{q^m}$ to prove that  $ev_L\left(\frac{X^i}{g(X)}\right)$ for $q^{m-1} \leq i \leq q^{m-1}+q^{m-2}+\cdots+q$ may be obtained from a $q$--power of some $ev_L\left( \frac{X^j}{g(X)}\right)$ where $0 \leq j \leq q^{m-1}-1$.

\begin{lemma}
Let $q^{m-1} \leq i \leq q^{m-1}+q^{m-2}+\cdots+q$. Then $ev_L\left(\frac{X^i}{g(X)}\right)^{(q) } \in C(L, g)^\perp$.
\end{lemma}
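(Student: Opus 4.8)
The plan is to show that the evaluation vector $ev_L\left(\frac{X^i}{g(X)}\right)$, for $i$ in the stated range $q^{m-1} \le i \le q^{m-1}+q^{m-2}+\cdots+q$, is \emph{already} a $\F_p$-linear combination (in fact a single $q$-power) of the ``standard'' parity-check vectors $ev_L\left(\frac{X^j}{g(X)}\right)$ with $0 \le j \le q^{m-1}-1$, and hence lies in $C(L,g)^\perp$. First I would use the key identity recorded in the excerpt, namely $g(\alpha)^q = g(\alpha)$ for every $\alpha \in \F_{q^m}$ (since $g = \Tr$ takes values in $\F_q$), which gives $ev_L\!\left(\frac{X^{qj}}{g(X)}\right) = ev_L\!\left(\frac{X^j}{g(X)}\right)^{(q)}$: raising a coordinate $\frac{\alpha^j}{g(\alpha)}$ to the $q$-th power yields $\frac{\alpha^{qj}}{g(\alpha)^q} = \frac{\alpha^{qj}}{g(\alpha)}$. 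Since each coordinate of $c$ lies in $\F_p \subseteq \F_q$, applying the Frobenius $x \mapsto x^q$ to the scalar identity $c \cdot ev_L\!\left(\frac{X^j}{g(X)}\right) = 0$ leaves the left side's $c_i$ fixed and produces $c \cdot ev_L\!\left(\frac{X^j}{g(X)}\right)^{(q)} = 0$; so it suffices to realize our vector as such a $q$-power.

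The combinatorial heart is the previous lemma on $q$-ary digit rotation: if $j = \sum_{r=0}^{m-1} j_r q^r$ with $0 \le j_r \le q-1$, then $qj \bmod (q^m-1)$ is the cyclic left-shift $\sum_{r=0}^{m-1} j_{r-1} q^r$ (indices mod $m$). So I would take an index $i$ with $q^{m-1} \le i \le q^{m-1}+q^{m-2}+\cdots+q$, write its $q$-ary expansion, and observe that every such $i$ has its units digit equal to $0$ (because $i$ is a sum of powers $q^{m-1},\dots,q$, all divisible by $q$, and $i \le q^{m-1}+\cdots+q < q^m-1$ so there is no wraparound forcing a correction). Hence $i = q j$ for the integer $j$ obtained by a cyclic right-shift of the digits of $i$; concretely $j = i/q$, and since $i$'s top digit $i_{m-1} \le 1$ while the others are $0$ or $1$, the shifted index $j = i/q$ satisfies $0 \le j \le q^{m-2}+q^{m-3}+\cdots+1 < q^{m-1}$, placing $j$ in the allowed range. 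Then $ev_L\!\left(\frac{X^i}{g(X)}\right) = ev_L\!\left(\frac{X^{qj}}{g(X)}\right) = ev_L\!\left(\frac{X^j}{g(X)}\right)^{(q)}$, and the displayed orthogonality follows from the Frobenius argument of the first paragraph.

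The step I expect to be the main obstacle is the bookkeeping around reduction modulo $q^m-1$: one must be careful that the exponents $X^i$ and $X^{qj}$ genuinely agree as functions on $L$ — they agree because $\alpha^{q^m-1}=1$ for $\alpha \in L$ (all such $\alpha$ are nonzero) — and that the digit-rotation lemma is being applied with the correct modulus and with no overflow, since the range of $i$ was chosen precisely so that $i < q^m - 1$ and the units digit vanishes. A secondary subtlety is confirming that the $q$-power operation $(\cdot)^{(q)}$ is a composite of $s$ applications of $(\cdot)^{(p)}$, so that ``$q$-power'' parity checks really are among the $ms$ $p$-power parity checks generating $C(L,g)^\perp$; this is immediate from $q = p^s$ but should be stated. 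Once these points are nailed down, the lemma is a direct consequence.
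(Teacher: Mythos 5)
There is a genuine gap in the combinatorial step. You read the range $q^{m-1} \leq i \leq q^{m-1}+q^{m-2}+\cdots+q$ as if $i$ were a sum of distinct powers $q^{m-1},\dots,q$, and concluded that the units digit of $i$ in base $q$ is always $0$, so that $i = qj$ with $j = i/q < q^{m-1}$. But the lemma concerns \emph{every integer} in that interval (and it must, since later the paper needs all consecutive exponents $0 \le i \le aq$ to yield parity checks). For example $i = q^{m-1}+1$ lies in the range for $m \ge 3$, has units digit $1$, and is not divisible by $q$; for such $i$ the right rotation of its digits has top digit $1$, so your $j$ satisfies $j \ge q^{m-1}$ and is not among the standard parity-check exponents. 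The auxiliary claims that all digits of $i$ are $0$ or $1$ are likewise false (e.g.\ $i = q^{m-1}+2q$ when $q>2$), although they are not needed when $q \mid i$.

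The repair is the observation the paper makes: in this range \emph{at least one} digit $i_r$ equals $0$, since if every $i_r \ge 1$ then $i \ge 1+q+\cdots+q^{m-1} > q^{m-1}+\cdots+q$. One then rotates by the appropriate power, $i' = q^{m-1-r} i \bmod (q^m-1)$, which moves that zero digit to the top position, giving $i' < q^{m-1}$ and
\[
ev_L\!\left(\frac{X^i}{g(X)}\right)^{(q^{m-1-r})} = ev_L\!\left(\frac{X^{i'}}{g(X)}\right) \in C(L,g)^\perp ,
\]
after which your (correct) Frobenius argument — $g(\alpha)^q = g(\alpha)$, codeword entries in $\F_p$, and closure of the space of parity checks under $p$-th powers in both directions — finishes the proof. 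So your overall strategy (digit rotation plus Frobenius invariance) is exactly the paper's; the flaw is that you rotate only by a single factor of $q$ and justify it with a false description of the $q$-ary expansions occurring in the given interval.
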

\begin{proof}
Suppose $q^{m-1} \leq i \leq q^{m-1}+q^{m-2}+\cdots+q$. The $q$--ary expansion of $i$ is of the form $i = \sum\limits_{r=0}^{m-1}i_rq^r$ where at least one of the entries $i_r = 0$. Otherwise if each $i_r \geq 1$ then $i > q^{m-1}+q^{m-2}+\cdot+q$. If $i_r = 0$, then $i' = q^{m-1-r}i \mod q^m-1 < q^{m-1}$. Therefore $ev_L\left(\frac{X^i}{g(X)}\right)^{(q^ {m-1-r})} = ev_L\left(\frac{X^{i'}}{g(X)}\right) \in C(L, g)^\perp$.  \end{proof}

A similar technique proves the following lemma:

\begin{lemma}
$ev_M\left(\frac{X^{q^{m-1}+q^{m-2}+ \cdots +q}}{h(X)}\right) \in C(M, h)^\perp$.

\end{lemma}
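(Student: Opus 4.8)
The plan is to imitate the proof of the preceding lemma: reduce the vector $ev_M\left(\frac{X^{q^{m-1}+q^{m-2}+\cdots+q}}{h(X)}\right)$ to one of the standard parity-check vectors $ev_M\left(\frac{X^j}{h(X)}\right)$ with $0\le j\le \deg h-1$ from Proposition~\ref{prop:GoppaPCE}, and then descend from a Frobenius twist of the vector to the vector itself. Write $b=1+q+\cdots+q^{m-1}$ and $i=q^{m-1}+q^{m-2}+\cdots+q$, so that $i=b-1=\deg h$. The first point to record is that the coefficient of $q^0$ in the $q$-ary expansion of $i$ is $0$. Consequently, by the lemma on $q$-ary expansions above (or directly: $(q-1)b=q^m-1$ gives $qb\equiv b\pmod{q^m-1}$, hence $qi=q(b-1)\equiv b-q$), we obtain $qi\bmod(q^m-1)=1+q^2+q^3+\cdots+q^{m-1}=b-q$. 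Since $q\ge 2$ this exponent satisfies $0\le b-q\le b-2=\deg h-1$, so $ev_M\left(\frac{X^{b-q}}{h(X)}\right)$ is one of the parity-check vectors of Proposition~\ref{prop:GoppaPCE} and therefore lies in $C(M,h)^\perp$.

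The second point transports this across the Frobenius $x\mapsto x^q$, using that $h$ is a trace polynomial. Every $\beta\in M$ is nonzero (because $h(0)=0$), so exponents may be reduced modulo $q^m-1$ when evaluating on $M$; and $h(\beta)=\Tr(\beta^{a})\in\F_q$ gives $h(\beta)^q=h(\beta)$. Hence $ev_M\left(\frac{X^i}{h(X)}\right)^{(q)}=ev_M\left(\frac{X^{qi}}{h(X)^q}\right)=ev_M\left(\frac{X^{qi\bmod(q^m-1)}}{h(X)}\right)=ev_M\left(\frac{X^{b-q}}{h(X)}\right)$, which by the first point lies in $C(M,h)^\perp$. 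This mirrors the relation $ev_L(X^i/g)^{(q)}=ev_L(X^{qi}/g)$ exploited in the earlier lemmas.

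The only step that is not purely formal is passing from the twisted statement $ev_M\left(\frac{X^i}{h(X)}\right)^{(q)}\in C(M,h)^\perp$ to the asserted $ev_M\left(\frac{X^i}{h(X)}\right)\in C(M,h)^\perp$, and this is exactly where it matters that $C(M,h)$ is defined over the prime field. For $c=(c_1,\dots,c_n)\in C(M,h)$ the entries satisfy $c_k\in\F_p\subseteq\F_q$, so $c_k^q=c_k$; since $x\mapsto x^q$ is additive on $\F_{q^m}$, this gives $\left(c\cdot ev_M\left(\frac{X^i}{h(X)}\right)\right)^{q}=c\cdot ev_M\left(\frac{X^i}{h(X)}\right)^{(q)}=c\cdot ev_M\left(\frac{X^{b-q}}{h(X)}\right)=0$, and injectivity of $x\mapsto x^q$ on $\F_{q^m}$ forces $c\cdot ev_M\left(\frac{X^i}{h(X)}\right)=0$ for every $c\in C(M,h)$, that is, $ev_M\left(\frac{X^i}{h(X)}\right)\in C(M,h)^\perp$. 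I would expect the exponent bookkeeping modulo $q^m-1$ and this final descent to be the only points needing care; note that the argument uses only $q\ge 2$, the standing hypothesis $m\ge 3$ being what ultimately makes $\deg h$ (and hence the improved distance bound obtained via $C(L,g)=C(M,h)$) worthwhile, rather than being needed for this lemma.
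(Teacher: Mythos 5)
Your proof is correct, and it uses the same two ingredients as the paper ($h$ takes values in $\F_q$, and the code is defined over the prime field, so Frobenius twists of parity-check vectors are again parity checks), but it runs the argument in the opposite direction. The paper simply observes that the target vector is itself a $q$-th Frobenius power of a standard parity check: since $aq=q\cdot a$ with $a<\deg h$, one has $ev_M\left(\frac{X^{aq}}{h(X)}\right)=ev_M\left(\frac{X^{a}}{h(X)}\right)^{(q)}$, and the orthogonal set is closed under $p$-powers, so no exponent reduction is needed. You instead twist the target forward by $q$, reduce the exponent modulo $q^m-1$ to land on the standard parity check $ev_M\left(\frac{X^{b-q}}{h(X)}\right)$, and then descend using injectivity of $x\mapsto x^q$ together with $c_k\in\F_p$. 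This costs you two extra (correctly handled) verifications that the paper's route avoids: that $0\notin M$ so exponents may be reduced modulo $q^m-1$ on $M$, and the descent step itself; on the other hand, your write-up makes explicit the closure/descent mechanism that the paper invokes only in passing, and your observation is the same one the paper uses for the analogous lemma about $C(L,g)^\perp$ (a zero digit in the $q$-ary expansion lets one rotate to an exponent below the degree). You are also right that only $q\ge 2$ is needed here and $m\ge 3$ plays no role in this particular lemma.
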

\begin{proof}
Recall that $a = q^{m-2}+q^{m-3}+ \cdots +q+1$. Thus
$ev_M\left(\frac{X^{aq}}{h(X)}\right) = ev_M\left(\frac{X^{ a}}{h(X)}\right)^{(q)}$. 
As $ev_M\left(\frac{X^{ a}}{h(X)}\right)^{(q)}$ is the $q$--power of $ev_M\left(\frac{X^{ a}}{h(X)}\right)$ and $a < deg(h)$ it follows that the evaluation vector $ev_M\left(\frac{X^{a}}{h(X)}\right) \in C(M, h)^\perp$. Since $ev_M\left(\frac{X^{q^{aq}}}{h(X)}\right)$ is a $q$--power of $ev_M\left(\frac{X^{aq}}{h(X)}\right)$ and $C(M,h)^\perp$ contains all of its $p$--powers, $ev_M\left(\frac{X^{ aq}}{h(X)}\right) \in C(M, h)^\perp$ follows. \end{proof}
Note that both $C(L,g)$ and $C(M,h)$ have parity check equations of consecutive powers from $0$ to $aq$. We prove now that there is a change of variables which maps one set of parity check equations to the other.

\begin{lemma}\label{lem:TraceEquiv}
The codes $C(L, g)$ and $C(M,h)$ are equal.
\end{lemma}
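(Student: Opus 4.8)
The plan is to exhibit an explicit bijection $L \to M$ and show it carries the parity check equations of $C(L,g)$ to those of $C(M,h)$ up to scalars, so that the two codes, both being subfield subcodes over $\F_p$ cut out by these equations, coincide. Concretely, the earlier lemma already tells us that $\alpha \mapsto \alpha^{-1}$ maps $L$ bijectively onto $M$. So relabel $M = \{\alpha^{-1} : \alpha \in L\}$ and, writing $L = \{\alpha_1,\dots,\alpha_n\}$, identify a codeword position $i$ with $\alpha_i \in L$ on the $C(L,g)$ side and with $\beta_i = \alpha_i^{-1} \in M$ on the $C(M,h)$ side. Under this identification both codes live in $\F_p^n$, and it suffices to prove that the $\F_{q^m}$-spans of their defining parity check vectors (including all $p$-power conjugates) are equal; then the subfield subcodes are equal.

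First I would record what the two parity check systems are. By Proposition~\ref{prop:GoppaPCE} and the $p$-power closure discussed in the text, $C(L,g)^\perp$ is spanned by $ev_L\!\left(\frac{X^i}{g(X)}\right)$ for $0\le i\le q^{m-1}-1$ together with all $p$-power conjugates; by the two preceding lemmas we may in fact take $0 \le i \le aq$, i.e.\ the consecutive powers $X^0,\dots,X^{aq}$ divided by $g$. Likewise $C(M,h)^\perp$ is spanned by $ev_M\!\left(\frac{Y^j}{h(Y)}\right)$ for $0\le j\le aq$, again consecutive powers divided by $h$, together with $p$-power conjugates. So the whole problem reduces to comparing, position by position, the quantities $\frac{\alpha_i^{\,j}}{g(\alpha_i)}$ and $\frac{\beta_i^{\,j}}{h(\beta_i)} = \frac{\alpha_i^{-j}}{h(\alpha_i^{-1})}$ for $j = 0,1,\dots,aq$.

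The key computation is the relation $h(\alpha^{-1}) = \alpha^{-aq}\, g(\alpha) = \alpha^{-b+1} g(\alpha)$, which is exactly the identity established inside the proof of the first lemma (dividing $\Tr(\alpha)$ by $\alpha^{b}$ and collecting powers gives $\alpha^{b-1}h(\alpha^{-1}) = g(\alpha)$). Using this, for each position $i$ and each $j$ with $0 \le j \le aq$,
\[
ev_M\!\left(\frac{Y^{\,aq-j}}{h(Y)}\right)_i \;=\; \frac{\alpha_i^{-(aq-j)}}{h(\alpha_i^{-1})} \;=\; \frac{\alpha_i^{\,j-aq}}{\alpha_i^{-aq} g(\alpha_i)} \;=\; \frac{\alpha_i^{\,j}}{g(\alpha_i)} \;=\; ev_L\!\left(\frac{X^{\,j}}{g(X)}\right)_i .
\]
Thus the map $X^j/g(X) \mapsto Y^{aq-j}/h(Y)$ sends each generator of $C(L,g)^\perp$ to a generator of $C(M,h)^\perp$ and, as $j$ runs over $0,\dots,aq$, this is a bijection between the two generating sets; since both operations also commute with taking $p$-th powers (the Frobenius acts on the vectors coordinatewise and the identification of positions is fixed), the $p$-power conjugates match up as well. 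Hence $C(L,g)^\perp = C(M,h)^\perp$ as subspaces of $\F_{q^m}^n$, and intersecting with $\F_p^n$ — equivalently, taking the subfield subcode — gives $C(L,g) = C(M,h)$.

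The main obstacle is bookkeeping rather than conceptual: one must be careful that the reindexing $j \mapsto aq - j$ really is the right involution (it is forced by the exponent arithmetic above, and it is symmetric, which is reassuring), and one must be sure that the generating set $\{X^i/g(X) : 0 \le i \le aq\}$ for $C(L,g)^\perp$ is legitimate — that is exactly what the two lemmas immediately preceding this one supply, by showing the extra powers $q^{m-1} \le i \le aq$ are $q$-power conjugates of lower ones and hence already in the dual code. A secondary point to state carefully is that equality of the $\F_{q^m}$-spans of the parity check vectors implies equality of the $\F_p$-subfield subcodes; this is the standard fact that the subfield subcode depends only on the dual code over the big field, and I would invoke it explicitly. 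With these two points in place the identity $h(\alpha^{-1})\alpha^{b-1} = g(\alpha)$ does all the real work.
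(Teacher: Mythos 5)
Your route is the paper's route: transport the parity checks through $\alpha\mapsto\alpha^{-1}$, convert $g(Y^{-1})$ into $h(Y)$ by multiplying by a power of $Y$, and match the two families of checks by an index involution, then pass to the $\F_p$-subfield subcodes. The framing (position identification, Frobenius closure, reduction to equality of the spans of checks) is fine and in places more carefully stated than the paper. The problem is the pivotal identity. From $h(y)=y^{b-1}+y^{b-q}+\cdots+y^{b-q^{m-1}}$ one gets $\alpha^{b}h(\alpha^{-1})=\alpha+\alpha^{q}+\cdots+\alpha^{q^{m-1}}=g(\alpha)$, i.e. $h(\alpha^{-1})=\alpha^{-b}g(\alpha)$, not $h(\alpha^{-1})=\alpha^{-aq}g(\alpha)=\alpha^{-(b-1)}g(\alpha)$ as you assert (your parenthetical ``$\alpha^{b-1}h(\alpha^{-1})=g(\alpha)$'' is off by one power of $\alpha$). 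Redoing your displayed computation with the correct identity gives
\begin{equation*}
ev_M\left(\frac{Y^{aq-j}}{h(Y)}\right)_i=\frac{\alpha_i^{\,j+1}}{g(\alpha_i)},
\end{equation*}
so the true correspondence is $ev_L\left(\frac{X^{i}}{g(X)}\right)=ev_M\left(\frac{Y^{\,b-i}}{h(Y)}\right)$ with $b=aq+1$, not $Y^{aq-i}$.

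This is not merely cosmetic: the map $i\mapsto b-i$ sends the exponent range $\{0,\dots,aq\}$ on the $L$ side onto $\{1,\dots,aq+1\}$ on the $M$ side, so it is \emph{not} a bijection between the two generating sets you describe. To close the argument one still has to show that the endpoint exponents give parity checks: for $C(M,h)\subseteq C(L,g)$ one needs $ev_M\left(\frac{Y^{aq+1}}{h(Y)}\right)\in C(M,h)^\perp$ (by the division trick of Lemma \ref{lem:Xpow}, equivalent to $ev_M(Y)\in C(M,h)^\perp$), and for the reverse inclusion one needs $ev_L\left(\frac{X^{\,b}}{g(X)}\right)\in C(L,g)^\perp$ (equivalent to $ev_L(X^{a})\in C(L,g)^\perp$); neither is supplied by the lemmas you cite nor by your bijection claim. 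In fairness, the paper's own proof contains the same slip --- it multiplies by $Y^{aq}$ although $Y^{aq}g(Y^{-1})\neq h(Y)$ (the correct multiplier is $Y^{b}$), and then sets $j=aq-i$ --- so you have reproduced the published argument together with its off-by-one; but as written the key identity is false and the matching of the two check families has a gap at one exponent in each direction that needs an explicit additional argument.
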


\begin{proof}
We'll prove that the map $x \mapsto x^{-1}$ maps the parity check equations $ev_L\left(\frac{X^i}{g(X)}\right) \in C(L, g)^\perp$ for $0 \leq i \leq aq$ to the parity check equations $ev_M\left(\frac{Y^j}{h(Y)}\right) \in C(M, h)^\perp$ for $0 \leq j \leq aq$.  Let $ev_L\left(\frac{X^{i}}{g(X)}\right) \in C(L, g)^\perp$ for $0 \leq i \leq  aq$. Let $Y = X^{-1}$, then $$ev_L\left(\frac{X^{i}}{g(X)}\right) = ev_M\left(\frac{Y^{-i}}{g(Y^{-1})}\right) $$
We multiply both sides of the fraction by $Y^{aq}$ and obtain: $$ev_M\left(\frac{Y^{-i}}{g(Y^{-1})}\right) = ev_M\left(\frac{Y^{ aq}Y^{-i}}{Y^{ aq}g(Y^{-1})}\right) = ev_M\left(\frac{Y^{ aq -i}}{h(Y)}\right).$$
Setting $j = aq-i$ we obtain that $0 \leq j \leq aq$. Therefore $$ev_L\left(\frac{X^i}{g(X)}\right) = ev_M\left(\frac{Y^{ aq -i}}{h(Y)}\right) = ev_M\left(\frac{Y^{ j}}{h(Y)}\right) \in C(M,h)^\perp.$$

Therefore $C(M, h) \subseteq C(L, g)$. All steps are reversible, which implies equality.
\end{proof}

The equality between the Goppa codes $C(L, g)$ and $C(M,h)$ leads to improved bounds on the minimum distance $d(C(L, g))$. Lemma \ref{lem:TraceEquiv} implies that $C(M,h) = C(M, h^2)$ for binary Goppa code even though $h$ itself is not a square free polynomial. This leads to a significant improvement of the distance bound.

\begin{corollary}\label{cor:TraceGoppaEquals}
The codes $C(L, g^2)$ and $C(M, h^2)$ are equal.

\end{corollary}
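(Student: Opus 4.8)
The plan is to rerun the argument of Lemma \ref{lem:TraceEquiv} with the squared polynomials $g^2$ and $h^2$. First observe that $g$ is squarefree: every monomial of $g(x)=x+x^q+\cdots+x^{q^{m-1}}$ other than $x$ has exponent a positive power of $q=p^s$, so the formal derivative is $g'(x)=1$ and $\gcd(g,g')=1$. Hence, in the binary case, the Sugiyama-type squaring equivalence with $q_0=p=2$ gives $C(L,g)=C(L,g^2)$; combined with Lemma \ref{lem:TraceEquiv} this makes the corollary equivalent to the assertion $C(L,g^2)=C(M,h^2)$ (and therefore also to $C(M,h)=C(M,h^2)$, even though $h$ is not squarefree).

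To prove $C(L,g^2)=C(M,h^2)$ I would follow the proof of Lemma \ref{lem:TraceEquiv} line by line. Squaring the identity $Y^{aq}g(Y^{-1})=h(Y)$ used there yields $Y^{2aq}g(Y^{-1})^2=h(Y)^2$, so the coordinate permutation $\alpha\mapsto\alpha^{-1}$, which carries $L$ bijectively onto $M$ by the first lemma of this section, sends the parity-check vector $ev_L\left(\frac{X^i}{g(X)^2}\right)$ to $ev_M\left(\frac{Y^{2aq-i}}{h(Y)^2}\right)$, the exponent being read modulo $q^m-1$ because we evaluate only at nonzero field elements.

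The two short lemmas preceding Lemma \ref{lem:TraceEquiv} then have to be transcribed for $g^2$ and $h^2$. Because $g(\alpha)\in\F_q$ for all $\alpha\in\F_{q^m}$ and $h(\beta)=\Tr(\beta^{a})\in\F_q$ for $\beta\in M$, one has $g(\alpha)^{2q}=g(\alpha)^2$ and $h(\beta)^{2q}=h(\beta)^2$; hence $C(L,g^2)^\perp$ and $C(M,h^2)^\perp$ are stable under the $q$-ary digit rotation of the lemma on $q$-ary expansions, so $ev_L(X^i/g^2)\in C(L,g^2)^\perp$ whenever some digit rotation of $i$ lies in $\{0,\dots,2q^{m-1}-1\}=\{0,\dots,\deg(g^2)-1\}$ and $ev_M(Y^j/h^2)\in C(M,h^2)^\perp$ whenever some digit rotation of $j$ lies in $\{0,\dots,2aq-1\}=\{0,\dots,\deg(h^2)-1\}$; for the one extra index $j=2aq=\deg(h^2)$ one uses $ev_M(Y^{2aq}/h^2)=ev_M(Y^{2a}/h^2)^{(q)}$ with $2a<\deg(h^2)$, just as in the lemma for $h$. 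It then remains to check that $i\mapsto 2aq-i \bmod q^m-1$ matches the enlarged index set of $C(L,g^2)^\perp$ with that of $C(M,h^2)^\perp$; since every step reverses, this gives $C(L,g^2)^\perp=C(M,h^2)^\perp$ and hence $C(L,g^2)=C(M,h^2)$.

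I expect the main obstacle to be this index bookkeeping rather than any single algebraic identity. Unlike in Lemma \ref{lem:TraceEquiv}, where a block of consecutive powers is matched with an equally long block, here $\deg(g^2)=2q^{m-1}$ is far smaller than $\deg(h^2)=2aq=2(q^{m-1}+\cdots+q)$, so the digit-rotation extensions are genuinely needed to close the gap; the point to verify is that every index $i$ with $2q^{m-1}\le i\le 2aq$ has a $q$-ary digit equal to $0$ or $1$ (because the smallest integer below $q^m$ all of whose digits are at least $2$ is $2(1+q+\cdots+q^{m-1})>2aq$), so that $ev_L(X^i/g^2)$ does land in $C(L,g^2)^\perp$ after a rotation, and symmetrically on the $h^2$ side. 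Finally, the case $q=2$ is degenerate: there $\deg g=n$, so $C(L,g)$, $C(M,h)$ and their squared companions are all the zero code and nothing has to be checked; the content of the corollary lies in the range $q\ge 3$. Once these points are settled, the rest is a routine transcription of the proof of Lemma \ref{lem:TraceEquiv}.
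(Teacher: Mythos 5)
Your proposal is correct and takes essentially the same route as the paper, whose own proof is literally ``repeat Lemma \ref{lem:TraceEquiv} with $g^2$, $h^2$ and $Y^{2aq}$ in place of $g$, $h$ and $Y^{aq}$.'' The extra bookkeeping you supply --- checking that every index between $2q^{m-1}$ and $2aq$ has a $q$-ary digit at most $1$, so that a digit rotation lands it below $\deg(g^2)$ (respectively $\deg(h^2)$) --- is exactly the detail the paper's ``all other steps are equal'' leaves implicit, so it strengthens rather than changes the argument.
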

\begin{proof}

The proof is the same is the one in Lemma \ref{lem:TraceEquiv} using $g^2$ and $h^2$ instead of $g$ and $h$ and using $Y^{2aq}$ intead of $Y^{aq}$. All other steps are equal. \end{proof}

\begin{corollary}

Let $q = 2^s$. Suppose $C(L, g)$ is a binary Goppa code. The minimum distance of $C(L,g)$ is at least $2(q^{m-1}+q^{m-2}+ \cdots +q) + 1$. \end{corollary}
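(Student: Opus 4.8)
The plan is to chain together the results already established and convert the equivalence of Goppa codes into a minimum distance bound via the Alternant code bound. First I would invoke Corollary \ref{cor:TraceGoppaEquals}, which gives $C(L,g^2) = C(M,h^2)$. The point is that in the binary case ($q = 2^s$), Proposition 3 (the squarefree doubling) applied to $g$ yields $C(L,g) = C(L,g^2)$, since $g(x) = \Tr(x)$ is squarefree (its formal derivative is $1$, so $\gcd(g,g')=1$). Hence $C(L,g) = C(L,g^2) = C(M,h^2)$, and it suffices to bound the minimum distance of $C(M,h^2)$.

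Next I would identify enough consecutive parity-check powers for $C(M,h^2)$. By definition $C(M,h^2)$ has among its parity checks all vectors $ev_M\!\left(\frac{Y^j}{h(Y)^2}\right)$ for $0 \le j \le 2\deg(h) - 1 = 2aq - 1$, together with all their $p$-powers, where $a = 1 + q + \cdots + q^{m-2}$ and $aq = q^{m-1} + q^{m-2} + \cdots + q$. Writing $a_i = h(\beta_i)^{-2}$ (these are nonzero since $\beta_i \in M$), the condition reads $\sum_i c_i a_i \beta_i^j = 0$ for $0 \le j \le 2aq - 1$, i.e.\ for $j$ in a run of $2aq$ consecutive integers starting at $0$. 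Applying Proposition 6 (the Alternant bound) with $\delta - 2 = 2aq - 1$, i.e.\ $\delta = 2aq + 1$, gives $d(C(M,h^2)) \ge 2aq + 1 = 2(q^{m-1} + q^{m-2} + \cdots + q) + 1$. Since $C(L,g) = C(M,h^2)$ as sets of vectors, $d(C(L,g)) = d(C(M,h^2)) \ge 2(q^{m-1} + q^{m-2} + \cdots + q) + 1$, which is the claim.

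The only subtlety — and the step I would be most careful about — is confirming that the parity-check equations for $C(M,h^2)$ really do include the full consecutive range $0 \le j \le 2aq-1$ as \emph{evaluation vectors over the small field acting as linear functionals}, and that $h$ having multiple roots does not shrink this range. Here the remark in the text that $h(\beta)^q = h(\beta)$ (because $h$, like $g$, is built from a trace and takes values in $\F_q$) is what guarantees $ev_M\!\left(\frac{Y^j}{h(Y)^2}\right)$ and its needed $q$-powers all lie in $C(M,h^2)^\perp$; combined with Corollary \ref{cor:TraceGoppaEquals} this is already done. So the proof is essentially an assembly: (i) $C(L,g) = C(L,g^2)$ by squarefreeness of $\Tr(x)$ in characteristic $2$; (ii) $C(L,g^2) = C(M,h^2)$ by Corollary \ref{cor:TraceGoppaEquals}; (iii) $C(M,h^2)$ is an alternant code with $2aq$ consecutive parity powers, so Proposition 6 yields $d \ge 2aq+1$. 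I expect no genuine obstacle beyond bookkeeping the index $aq = \deg(h)$ correctly.
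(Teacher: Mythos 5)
Your proof is correct and takes essentially the same route as the paper: $C(L,g)=C(L,g^2)$ by squarefreeness of $\Tr(x)$ in characteristic $2$, then $C(L,g^2)=C(M,h^2)$ by Corollary \ref{cor:TraceGoppaEquals}, and finally the distance bound $d \geq \deg(h^2)+1 = 2aq+1$ from the consecutive parity-check powers. Your worry about multiple roots of $h$ is unnecessary: the parity-check range $0 \le j \le \deg(h^2)-1$ in Proposition \ref{prop:GoppaPCE} holds for any Goppa polynomial, squarefree or not, so the assembly goes through exactly as you describe.
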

\begin{proof}
Let $q$ be a power of $2$. Note that $g = \Tr(x)$ is a squarefree polynomial. Therefore the binary Goppa code $C(L, g)$ is equal to $C(L, g^2)$. As $C(L,g^2) = C(M,h^2)$ and $deg(h) = 2(q^{m-1}+q^{m-2}+ \cdots +q)$, the bound follows. \end{proof}

This lemma states a sufficient condition for a funcion of the form $ev_L(X^i)$ to be in $C(L, g)^\perp$.

 \begin{lemma}\label{lem:Xpow}

 Let $g$ be a monic polynomial of degree $t$. Let $i \geq t$. Suppose that $ev_{L}\left( \frac{X^{i'}}{g(X)} \right) \in C(L, g)^\perp$ for all $0 \leq i' <  i+t$. Then $$ev_L\left(\frac{X^{i+t}}{g(X)}\right) \in C(L, g)^\perp \makebox{  if and only if } ev_L(X^{i}) \in C(L, g)^\perp.$$
\end{lemma}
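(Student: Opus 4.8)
The plan is to relate the two evaluation vectors $ev_L\left(\frac{X^{i+t}}{g(X)}\right)$ and $ev_L(X^i)$ by performing polynomial division of $X^{i+t}$ by $g(X)$. Since $g$ is monic of degree $t$, I can write $X^{i+t} = X^i g(X) + r(X)$ where $\deg r < i+t$; more precisely, writing $g(X) = X^t + g_{t-1}X^{t-1} + \cdots + g_0$, I get $X^{i+t} = X^i g(X) - \left(g_{t-1}X^{i+t-1} + \cdots + g_0 X^i\right)$. Dividing through by $g(X)$ gives the identity of rational functions
\begin{equation*}
\frac{X^{i+t}}{g(X)} = X^i - \sum_{k=0}^{t-1} g_k \frac{X^{i+k}}{g(X)},
\end{equation*}
and applying the evaluation map $ev_L$ (which is linear) yields
\begin{equation*}
ev_L\left(\frac{X^{i+t}}{g(X)}\right) = ev_L(X^i) - \sum_{k=0}^{t-1} g_k\, ev_L\left(\frac{X^{i+k}}{g(X)}\right).
\end{equation*}

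Next I would invoke the hypothesis: each index $i+k$ for $0 \le k \le t-1$ satisfies $i+k < i+t$, so by assumption every vector $ev_L\left(\frac{X^{i+k}}{g(X)}\right)$ lies in $C(L,g)^\perp$. Since $C(L,g)^\perp$ is a linear subspace, the entire sum $\sum_{k=0}^{t-1} g_k\, ev_L\left(\frac{X^{i+k}}{g(X)}\right)$ belongs to $C(L,g)^\perp$. The displayed identity then expresses $ev_L\left(\frac{X^{i+t}}{g(X)}\right)$ as the difference of $ev_L(X^i)$ and an element of $C(L,g)^\perp$. Consequently $ev_L\left(\frac{X^{i+t}}{g(X)}\right) \in C(L,g)^\perp$ if and only if $ev_L(X^i) \in C(L,g)^\perp$, which is exactly the claim. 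The argument is symmetric since we are only adding and subtracting vectors already known to be in the dual code, so both directions of the "if and only if" come out at once.

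I do not expect a serious obstacle here; the only point requiring a little care is bookkeeping on the degree bound, namely checking that the relevant shifted indices $i+k$ really are all strictly less than $i+t$ so the hypothesis applies to each term (this uses $k \le t-1$), and noting that we need $g$ monic so that the leading term of $X^i g(X)$ is exactly $X^{i+t}$ with coefficient $1$, leaving no extra multiple of $ev_L\left(\frac{X^{i+t}}{g(X)}\right)$ on the right-hand side. One should also remark that the hypothesis $i \ge t$ is what guarantees all exponents appearing are nonnegative, so that $X^{i+k}$ and $X^i$ are genuine polynomials and the evaluation map is defined on them. With these remarks in place the proof is essentially the one-line linear-algebra observation above.
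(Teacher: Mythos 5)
Your proof is correct and is essentially the same argument as the paper's: the paper likewise writes $ev_L(X^i) - ev_L\left(\frac{X^{i+t}}{g(X)}\right) = ev_L\left(\frac{X^i g(X) - X^{i+t}}{g(X)}\right)$, notes that monicity makes every surviving term have exponent strictly below $i+t$, and concludes from the hypothesis that this difference lies in $C(L,g)^\perp$. Your version just spells out the remainder coefficients $g_k$ explicitly, which is a harmless refinement.
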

 \begin{proof}
 Suppose that $ev_{L}\left( \frac{X^{i'}}{g(X)} \right) \in C(L, g)^\perp$ for all $0 \leq i' <  i+t$. 
 
 Note that $$ev_{L}\left(X^i\right) - ev_{L}\left( \frac{X^{i+t}}{g(X)}\right) = ev_{L}\left(X^i - \frac{ X^{i+t}}{g(X)}\right) = ev_{L}\left(\frac{  X^ig(X) - g_tX^{i+t}}{g(X)}\right).$$ As $g$ is a polynomial of degree $t$, all terms of $X^ig(X) - X^{i+t}$ have degree less than $i+t$. By the hypothesis of this Lemma; $ev_{L}\left(\frac{  X^ig(X) - X^{i+t}}{g(X)}\right) \in C(L,g)^\perp$. As the difference $$ev_{L}\left(X^i\right) - ev_{L}\left( \frac{X^{i+t}}{g(X)}\right) \in C(L, g)^\perp$$ it follows that $ev_L(\frac{X^{i+t}}{g}) \in C(L, g)^\perp$ if and only if $ev_L(X^{i}) \in C(L, g)^\perp$. \end{proof}

In the next Lemma we find more consecutive parity check equations of the form $ev_{M}(\frac{X^j}{h(X)^2})$ where $j$ is larger than $2deg(h)-1$.

\begin{lemma}\label{lem:morehighpowers}

Let $0 \leq j \leq 2aq + 2$. Then  $ev_M\left(\frac{Y^j}{h(Y)^2} \right) \in C(M,h^2)^\perp$.
\end{lemma}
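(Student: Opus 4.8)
The plan is to reduce the statement to the known parity check equations for $C(M,h^2)$ together with repeated applications of Lemma~\ref{lem:Xpow}. By Corollary~\ref{cor:TraceGoppaEquals} we have $C(M,h^2) = C(L,g^2)$, and the basic parity check equations for $C(M,h^2)$ are $ev_M\left(\frac{Y^j}{h(Y)^2}\right) \in C(M,h^2)^\perp$ for $0 \leq j \leq 2\deg(h)-1 = 2aq-1$, since $h^2$ is a polynomial of degree $2aq$. So the content of the lemma is really the three extra powers $j = 2aq,\ 2aq+1,\ 2aq+2$. First I would record that the consecutive range $0 \leq j \leq 2aq-1$ is already available, so that when I invoke Lemma~\ref{lem:Xpow} with $g$ replaced by $h^2$ and $t = 2aq$ the hypothesis ``$ev_M\left(\frac{Y^{j'}}{h(Y)^2}\right) \in C(M,h^2)^\perp$ for all $0 \leq j' < i + t$'' is satisfied as I push $i$ upward one step at a time.

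Next I would apply Lemma~\ref{lem:Xpow} in the contrapositive-free direction: with $t = 2aq$, taking $i = 0$ gives that $ev_M\left(\frac{Y^{2aq}}{h(Y)^2}\right) \in C(M,h^2)^\perp$ if and only if $ev_M(Y^0) = ev_M(\mathbf{1}) \in C(M,h^2)^\perp$. So I need the all-ones vector, i.e. the $j=0$ row, which is exactly $ev_M\left(\frac{Y^0 h(Y)^2}{h(Y)^2}\right)$ minus lower-degree terms — but more directly, $ev_M(\mathbf 1) \in C(M,h^2)^\perp$ is the same as saying every codeword of $C(M,h^2)$ has coordinate-sum zero, which for a binary Goppa code follows because $ev_M\left(\frac{Y^{2aq}}{h(Y)^2}\right)$ and the structure of $h$; alternatively, and more cleanly, I would go back through the equivalence $C(M,h^2)=C(L,g^2)$ and check the corresponding statement for $C(L,g^2)$, where $g = \Tr(x)$ has a zero constant term, making $ev_L(\mathbf 1)$ accessible. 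Concretely, I expect $ev_L(X^0)$, $ev_L(X^1)$, $ev_L(X^2)$ to lie in $C(L,g^2)^\perp$ because $g(x) = \Tr(x)$ has no constant, linear-coefficient-adjusted... — the point being that $X^k g(X)^2$ has low enough order of vanishing that the relations $ev_L(X^k) \in C(L,g^2)^\perp$ for $k = 0,1,2$ come for free from the $0 \leq i \leq 2aq+2$ range on the $g^2$ side combined with the fact that $g(x)^2 = \Tr(x)^2$ is divisible by $x^2$. I would make this precise and then transport it back to the $M$ side via $x \mapsto x^{-1}$.

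Having established $ev_M(Y^0), ev_M(Y^1), ev_M(Y^2) \in C(M,h^2)^\perp$ (equivalently on the $L$ side), I would then run Lemma~\ref{lem:Xpow} three times in sequence: $i=0$ gives $j = 2aq$ from $ev_M(Y^0)$; then, with $2aq$ now in the available consecutive range, $i = 1$ gives $j = 2aq+1$ from $ev_M(Y^1)$; then $i = 2$ gives $j = 2aq+2$ from $ev_M(Y^2)$. Each step needs the previous conclusion to satisfy the hypothesis of Lemma~\ref{lem:Xpow} for the next, which is why the order matters and why the three extra powers must be peeled off one at a time rather than all at once.

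The main obstacle I anticipate is exactly the base case: showing $ev_M(Y^k) \in C(M,h^2)^\perp$ for $k = 0,1,2$. Whether this holds for $k$ as large as $2$ depends delicately on how divisible $\Tr(x)^2$ is by a power of $x$ — since $\Tr(x) = x + x^q + \cdots$ has $x$ as a factor, $\Tr(x)^2$ is divisible by $x^2$, which should be exactly enough to get $k = 0, 1, 2$ but not more. I would therefore be careful to verify the divisibility bound precisely (it is here that the hypothesis $m \geq 3$, hence $q \geq 2$ and $\deg h = aq \geq 2$, is used so that $h(y)$ itself has positive order of vanishing at $0$), and to confirm that the change of variables $y = x^{-1}$ in Lemma~\ref{lem:TraceEquiv}/Corollary~\ref{cor:TraceGoppaEquals} correctly matches $ev_L(X^k)$ (a \emph{polynomial} evaluation, not a $\frac{X^i}{g}$ evaluation) with the corresponding object on the $M$ side after clearing denominators by $Y^{2aq}$.
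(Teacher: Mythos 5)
Your overall plan---peel off $j=2aq,\,2aq+1,\,2aq+2$ one at a time via Lemma~\ref{lem:Xpow} from the polynomial evaluations $ev_M(Y^0),\,ev_M(Y^1),\,ev_M(Y^2)$---has a genuine gap exactly at the point you yourself flag: you never actually prove the base cases, and the first one, $ev_M(\mathbf 1)\in C(M,h^2)^\perp$, does not ``come for free.'' Your justification appeals to ``the $0\le i\le 2aq+2$ range on the $g^2$ side,'' but on the $L$ side the parity checks you are entitled to are only $ev_L\bigl(\frac{X^i}{g(X)^2}\bigr)$ for $0\le i\le 2q^{m-1}-1$ (the degree of $g^2$ is $2q^{m-1}$, not $2aq$); a range up to $2aq+2$ there is, via the substitution $Y=X^{-1}$, essentially the statement being proved, so the argument is circular. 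Concretely, writing $ev_L(X^0)=ev_L\bigl(\frac{g(X)^2}{g(X)^2}\bigr)$ produces the exponent $2q^{m-1}$, which lies outside the basic range, so the all-ones vector (equivalently the step $j=2aq$) cannot be extracted from the definition of the code alone. The missing ingredient is the one the paper uses and you never invoke: $h$ (like $g$) takes values in $\F_q$, so $q$-powers of parity checks keep the same denominator, and since $j=2aq$ and $j=2aq+1$ each have a $q$-ary digit $\le 1$, a cyclic shift $j\mapsto jq^r \bmod q^m-1$ lands below $2q^{m-1}\le \deg(h^2)$; this handles $2aq$ and $2aq+1$ directly, after which $ev_M(Y)$, then $ev_M(Y^2)$ (squaring in characteristic $2$), then $j=2aq+2$ via Lemma~\ref{lem:Xpow} follow in that order. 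Your proposed order is the reverse of the paper's and leaves its first step unsupported.

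Two smaller points. First, the part of your base case that can be salvaged: $ev_M(Y)=ev_L(X^{-1})$ and $ev_M(Y^2)=ev_L(X^{-2})$ are indeed free, since $ev_L\bigl(\frac{X^{-k}g(X)^2}{g(X)^2}\bigr)$ for $k=1,2$ involves only exponents $2q^r-k\in[0,2q^{m-1}-1]$ (this is your ``$g^2$ divisible by $x^2$'' idea made precise, and it is close in spirit to Lemma~\ref{lem:lowconsecutivepowers}); but this still does not give $k=0$, so the chain breaks before it starts. Second, Lemma~\ref{lem:Xpow} is stated with the hypothesis $i\ge t$, which your applications with $i=0,1,2$ and $t=2aq$ violate; the proof of that lemma does not really use $i\ge t$, but as written you cannot cite it as a black box and would need to note this.
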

\begin{proof} 
The definition of the binary Goppa code $C(M,h^2)$ establishes  $$ev_M\left(\frac{Y^j}{h(Y)^2} \right) \in C(M,h^2)^\perp, 0 \leq j \leq 2aq-1.$$ 

Now we consider larger $j$. If $2aq \leq j \leq 2aq + 1 $  then the $q$--base expression of $j$, $j = \sum\limits_{r=0}^{m-1} j_rq^r$ where $0 \leq j_r \leq q-1$ has at least one of the $j_r$ satisfy $j_r \leq 1$. Otherwise $j_r \geq 2$ implies $j \geq 2aq+2$. Therefore there is a number of the form $j' = jq^r \mod q^m-1$ which is smaller than $2q^{m-1}$. In this case $$ev_M\left(\frac{Y^j}{h(Y)^2} \right) =  ev_M\left(\frac{(Y^{j'})^{q^{m-r}}}{h(Y)^2} \right) = ev_M\left(\frac{Y^{j'}}{h(Y)^2} \right)^{(q^{m-r})}  \in C(M,h^2)^\perp.$$

For $ j  = 2aq + 2 $, note that $$\frac{h(Y)^2 Y}{h(Y)^2} = Y \makebox{ and } \frac{h(Y)^2 Y^{2}}{h(Y)^2} = Y^2.$$ However $h(Y)^2Y $ can be written as the sum of  $Y^{2aq+1}$ plus other smaller powers of $Y$. Likewise $h(Y)^2Y^2 $ can be written as the sum of  $Y^{2aq+2}$ plus other smaller powers of $Y$. Therefore $ev_M(Y) = ev_M\left(\frac{h(Y)^2 Y}{h(Y)^2}\right) \in C(M,h^2)^\perp$. As $C(M,h^2)$ is a binary Goppa code both $C(M,h^2)$ and $C(M, h^2)^\perp$ contain all $2$--powers of their codewords. Therefore $ev_M(Y)^2 = ev_M(Y^2) \in C(M, h^2)^\perp $. Lemma \ref{lem:Xpow} implies $ev_M\left(\frac{h(Y)^2 Y^{2}}{h(Y)^2}\right) \in C(M,h^2)^\perp$. As $h(Y)^2Y^2$ is the sum of $Y^{2aq+2}$ plus other lower powers of $Y$ it follows that the evaluation vector $ev_M\left(\frac{Y^{2aq+2}}{h(Y)^2} \right) \in C(M,h^2)^\perp$. \end{proof}
Now we shall prove that $C(M, h^2)^\perp$ contains additional consecutive parity check equations for negative powers of $X$, which further improves the minimum distance.

\begin{lemma}\label{lem:lowconsecutivepowers} Let $q$ be a power of $2$,
Let $0 < i < \frac{1+q+q^2+\cdots+q^{m-2}}{\frac{q}{2}-1}$. Then $$ev_M\left( \frac{Y^{-i}}{h(X)^2}\right) \in C(M, h^2)^\perp.$$
\end{lemma}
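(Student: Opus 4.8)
The plan is to mirror the strategy of Lemma \ref{lem:morehighpowers}, but working with the ``low'' (negative-exponent) parity check equations for $C(M,h^2)$ and transporting them, via the change of variables $Y \mapsto Y^{-1}$ established in Lemma \ref{lem:TraceEquiv} and Corollary \ref{cor:TraceGoppaEquals}, to the ``high'' parity check equations already known for $C(L,g^2)$. Concretely, the substitution $X = Y^{-1}$ sends $ev_M\bigl(\tfrac{Y^{-i}}{h(Y)^2}\bigr)$ to an evaluation vector of the form $ev_L\bigl(\tfrac{X^{i+c}}{g(X)^2}\bigr)$ for an appropriate shift $c$ coming from clearing denominators by a power of $X$ (recall $\deg(h)=aq$, so $h(Y)^2 = Y^{2aq}g(Y^{-1})^2$ up to the leading behaviour, exactly as in the proof of Lemma \ref{lem:TraceEquiv}). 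Since $C(L,g^2)=C(M,h^2)$, it suffices to show the corresponding high-power evaluation vectors $ev_L\bigl(\tfrac{X^{j}}{g(X)^2}\bigr)$ lie in $C(L,g^2)^\perp$ for $j$ running over the relevant range, which by Lemma \ref{lem:morehighpowers} (rephrased on the $L$ side) we already have up to $j = 2aq+2$.

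First I would make the index bookkeeping precise: show that under $Y\mapsto Y^{-1}$ the statement ``$ev_M\bigl(\tfrac{Y^{-i}}{h(Y)^2}\bigr)\in C(M,h^2)^\perp$'' is equivalent to ``$ev_L\bigl(\tfrac{X^{2aq+i}}{g(X)^2}\bigr)\in C(L,g^2)^\perp$'' (the shift $2aq$ being $2\deg(h)$, matching the $Y^{2aq}$ used to clear denominators in Corollary \ref{cor:TraceGoppaEquals}). So the task reduces to extending the range of consecutive high powers for $C(L,g^2)^\perp$ beyond $2aq+2$. Here $g(X)^2 = \bigl(X+X^q+\cdots+X^{q^{m-2}}\bigr)^2 = X^2 + X^{2q} + \cdots + X^{2q^{m-2}}$ since $q$ is a power of $2$; this sparse shape is what makes the argument go. The key point is that multiplying $g(X)^2$ by a monomial $X^k$ produces $X^{k+2q^{m-2}}$ plus strictly lower powers, so by Lemma \ref{lem:Xpow} (with $t = 2q^{m-2} = \deg(g^2)$) we get $ev_L\bigl(\tfrac{X^{k+2q^{m-2}}}{g(X)^2}\bigr)\in C(L,g^2)^\perp$ as soon as $ev_L(X^k)\in C(L,g^2)^\perp$ and all intermediate fractions are already known to be parity checks. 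So I would prove inductively: whenever $ev_L(X^k)$ is a parity check, so is the fraction with exponent $k+2q^{m-2}$; and the $2$-power closure of $C(L,g^2)^\perp$ means $ev_L(X^k)$ being a parity check yields $ev_L(X^{2k})$ too. Iterating the bound ``$k \mapsto k + 2q^{m-2}$'' starting from the small exponents already covered, and using that $2q^{m-2} = \tfrac{2}{q/2}\cdot q^{m-1}$ is what governs how many extra consecutive powers this yields, produces exactly the threshold $\tfrac{1+q+\cdots+q^{m-2}}{q/2-1}$ in the statement (the denominator $q/2-1$ reflecting that each step of the recursion, after converting via the $q$-ary expansion as in the earlier lemmas, advances the reachable exponent by a factor tied to $q/2$).

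The main obstacle I anticipate is the last step: showing the chain of exponents reached by the recursion $k \mapsto k+2q^{m-2}$ (interleaved with doubling and with reductions modulo $q^m-1$ to keep exponents in range) actually covers a \emph{consecutive} block of negative powers $-1,-2,\dots,-(i-1)$ on the $M$ side, with no gaps, up to the claimed bound. This is essentially a carry-propagation argument on $q$-ary (equivalently base-$2$) expansions, analogous to Lemma 7's ``at least one digit is $\le 1$'' reasoning, but now one has to track which exponents are hit and verify the count matches $\tfrac{1+q+\cdots+q^{m-2}}{q/2-1} = \tfrac{a}{q/2-1}$. I would handle this by: (i) writing the target negative exponent $-i$, after the variable change and denominator clearing, as a specific positive exponent $2aq+i$; (ii) checking that for $i$ below the stated bound, $2aq+i$ has a $q$-ary expansion in which some digit is small enough (at most $1$, using $q$ even) that a cyclic shift by a power of $q$ brings it below $2q^{m-1}$, landing it in the range already secured by Lemma \ref{lem:morehighpowers} together with the monomial-multiplication extension; and (iii) confirming the threshold on $i$ is exactly the largest value for which this digit condition is guaranteed. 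Everything else — the substitution identities, the use of $g(\alpha)^q = g(\alpha)$, and the $2$-power closure of the dual — is routine given the lemmas already proved.
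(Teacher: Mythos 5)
Your opening reduction is fine: under $Y \mapsto Y^{-1}$ one has $h(Y) = Y^{aq+1}g(Y^{-1})$, so $ev_M\left(\frac{Y^{-i}}{h(Y)^2}\right)$ is (up to reordering coordinates) $ev_L\left(\frac{X^{2aq+2+i}}{g(X)^2}\right)$, and proving the lemma is indeed equivalent to extending the consecutive high powers on the $L$ side past $2aq+2$. The gap is that your mechanism for that extension does not work. First, you misstate $g$: since $g(x) = \Tr(x)$ has degree $q^{m-1}$, we have $\deg(g^2) = 2q^{m-1}$, not $2q^{m-2}$, so the recursion ``$k \mapsto k+2q^{m-2}$'' and the numerology you build on it collapse. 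More seriously, the extension via Lemma \ref{lem:Xpow} is circular: to obtain the fraction with exponent $2aq+2+i$ you need the monomial $ev_L(X^{k})$ with $k = 2aq+2+i-2q^{m-1}$, and for $i \geq 1$ that monomial is, again by Lemma \ref{lem:Xpow}, equivalent to precisely the new fraction you are trying to establish; the $2$-power and $q$-power closure of the dual only produce scattered exponents, never the required \emph{consecutive} block of new monomials. Your fallback digit argument in step (ii) also fails: already for small $i$ (e.g.\ $i=2$ with your shift, or any $i\geq 1$ with the correct shift $2aq+2+i$) every base-$q$ digit of the exponent is at least $2$, so no Frobenius shift brings it below $2q^{m-1}$ (equivalently below $2aq$ on the $M$ side) — this is exactly why the ``some digit $\leq 1$'' trick of Lemma \ref{lem:morehighpowers} stops at $2aq+1$. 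Finally, the threshold $\frac{1+q+\cdots+q^{m-2}}{q/2-1}$ is asserted to ``come out'' of the iteration but is never derived.

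The paper's proof uses a different, and essential, idea that your proposal lacks: induction on $i$ on the $M$ side combined with a divide-by-$h$ step. One raises $ev_M\left(\frac{Y^{-i}}{h(Y)^2}\right)$ to the $\frac{q}{2}$-Frobenius power, so the denominator becomes $h^q$, which equals $h$ upon evaluation because $h$ takes values in $\F_q$; then one multiplies numerator and denominator by $h(Y)$, whose \emph{lowest}-degree term is $Y^{a}$. The numerator $Y^{-iq/2}h(Y)$ then has all exponents between $a - i\frac{q}{2}$ and $aq - i\frac{q}{2}$, and the hypothesis $i < \frac{a}{q/2-1}$ is exactly the inequality $a - i\frac{q}{2} > -i$, so every term is either a nonnegative exponent below $2aq$ (a defining parity check of $C(M,h^2)$) or a negative exponent $-k$ with $k \leq i-1$ (covered by the induction hypothesis); closure of orthogonality under componentwise Frobenius then removes the $\frac{q}{2}$-power. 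Without some version of this step — using $h^q = h$ on evaluation and the fact that the smallest exponent of $h$ is the large number $a$ — your argument never actually reaches the exponents in question, and the stated bound on $i$ plays no role in it.
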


\begin{proof}

We shall prove that $$ev_M\left( \frac{Y^{-i}}{h(X)^2}\right)^{(\frac{q}{2})}   \in C(M, h^2)^\perp, 0 < i < \frac{1+q+q^2+\cdots+q^{m-2}}{\frac{q}{2}-1}.$$  We proceed by induction and start with $i = 1$. Since $1 < \frac{1+q+q^2+\cdots+q^{m-2}}{\frac{q}{2}-1}$. We take the $\frac{q}{2}$--power of $$ev_M\left( \frac{Y^{-1}}{h(X)^2}\right)^{\frac{q}{2}} = ev_M\left(\frac{Y^{-\frac{q}{2}}}{h(X)}\right) = ev_M\left(\frac{Y^{-\frac{q}{2}}h(X)}{h(X)^2}\right).$$

As the lowest degree term of $h(X)$ is $X^{1+q+q^2+\cdots+q^{m-2}}$, all terms of $Y^{-\frac{q}{2}}{h(X)}$ are between $0$ and $1+q+q^2+\cdots+q^{m-1}$. Thus $ev_M\left(\frac{Y^{-\frac{q}{2}}h(X)}{h(X)^2}\right) \in C(M,h^2)^\perp$.

Now let us suppose that for all $1 \leq k \leq j-1$, $ev_M\left( \frac{Y^{-k}}{h(X)^2}\right) \in C(M, h^2)^\perp$. Furthermore, suppose that $j < \frac{1+q+q^2+\cdots+q^{m-2}}{\frac{q}{2}-1} $.

Now take the $\frac{q}{2}$--power of $ev_M\left( \frac{Y^{-j}}{h(X)^2}\right)^{\frac{q}{2}} = ev_M\left(\frac{Y^{-j\frac{q}{2}}}{h(X)}\right) = ev_M\left(\frac{Y^{-j\frac{q}{2}}h(X)}{h(X)^2}\right).$ As $j < \frac{1+q+q^2+\cdots+q^{m-2}}{\frac{q}{2}-1} $, we have that $1+q+q^2 +\cdots + q^{m-2} -j\frac{q}{2} > -j$ which implies $ev_M\left( \frac{Y^{-j}}{h(X)^2}\right)^{\frac{q}{2}} \in C(M,h^2)^\perp.$
\end{proof}

\begin{theorem}
The minimum distance of the $p$--ary Goppa code $C(L, g)$ is at least $q^{m-1}+q^{m-2}+ \cdots + q+1$. If $p = 2$ then the minimum distane of the binary Goppa code is at least $2(q^{m-1}+q^{m-2}+ \cdots + q+1) + \lfloor\frac{1+q+q^2+\cdots+q^{m-2}}{\frac{q}{2}-1}\rfloor$.

\end{theorem}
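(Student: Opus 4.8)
The plan is to derive both bounds from the Alternant distance bound (Proposition 5), applied not to $C(L,g)$ itself but to the equivalent Goppa code produced by the lemmas above, in which a long block of consecutive evaluation vectors $ev\big(\tfrac{X^{j}}{(\cdot)}\big)$ is already known to lie in the dual. The first (general $p$) bound is then immediate: by Lemma \ref{lem:TraceEquiv} we have $C(L,g)=C(M,h)$ with $\deg h=aq=q^{m-1}+q^{m-2}+\cdots+q$ and $a=1+q+\cdots+q^{m-2}$; the defining parity-check vectors $ev_M\big(\tfrac{Y^{j}}{h(Y)}\big)$ for $0\le j\le \deg h-1$ give $aq$ consecutive powers, so Proposition 5 with evaluation points $\beta_\ell=\alpha_\ell^{-1}$ and multipliers $a_\ell=h(\beta_\ell)^{-1}$ yields $d(C(L,g))=d(C(M,h))\ge aq+1=q^{m-1}+\cdots+q+1$. (The earlier lemmas in fact place $ev_M\big(\tfrac{Y^{aq}}{h(Y)}\big)$ in the dual as well, so one even gets $d\ge aq+2$; only the weaker stated bound is needed.)

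For the binary case $p=2$, I would first observe that $g=\Tr(x)$ is squarefree, so $C(L,g)=C(L,g^2)$ by Proposition 2, and $C(L,g^2)=C(M,h^2)$ by Corollary \ref{cor:TraceGoppaEquals}; hence $C(L,g)^\perp=C(M,h^2)^\perp$. I would then gather every consecutive evaluation vector the lemmas provide in $C(M,h^2)^\perp$: Lemma \ref{lem:morehighpowers} gives $ev_M\big(\tfrac{Y^{j}}{h(Y)^2}\big)\in C(M,h^2)^\perp$ for all $0\le j\le 2aq+2$, and Lemma \ref{lem:lowconsecutivepowers} gives $ev_M\big(\tfrac{Y^{-i}}{h(Y)^2}\big)\in C(M,h^2)^\perp$ for all $0<i<\frac{a}{\frac{q}{2}-1}$. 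Writing $K=\big\lfloor\frac{a}{\frac{q}{2}-1}\big\rfloor$, the number $c$ of admissible negative exponents $-i$ equals $K$ when $\frac{a}{\frac{q}{2}-1}\notin\mathbb Z$ and $K-1$ when it is an integer, so in all cases $c\ge K-1$; consequently the exponents $j$ with $ev_M\big(\tfrac{Y^{j}}{h(Y)^2}\big)\in C(M,h^2)^\perp$ contain the uninterrupted block $\{-c,\ldots,-1,0,1,\ldots,2aq+2\}$.

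To feed this negative-to-positive block into Proposition 5, I would multiply each identity $\sum_\ell c_\ell\tfrac{\beta_\ell^{\,j}}{h(\beta_\ell)^2}=0$ through by $\beta_\ell^{\,c}$. This is legitimate because $0\notin M$ --- the lowest-degree term of $h$ is $Y^{a}$ with $a>0$, so $h(0)=0$ --- whence every $\beta_\ell$ is a unit and $a_\ell:=\beta_\ell^{\,c}h(\beta_\ell)^{-2}$ is a nonzero element of $\F_{q^m}$. The block then becomes $\sum_\ell c_\ell a_\ell\beta_\ell^{\,k}=0$ for all $0\le k\le 2aq+2+c$, a run of $2aq+3+c$ consecutive powers, so Proposition 5 gives $d(C(L,g))=d(C(M,h^2))\ge 2aq+4+c\ge 2aq+K+3\ge 2\big(q^{m-1}+\cdots+q+1\big)+\big\lfloor\tfrac{1+q+\cdots+q^{m-2}}{\frac{q}{2}-1}\big\rfloor$, which is the claimed inequality.

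I expect the only real work to be the bookkeeping in the last two paragraphs: verifying that the negative range of Lemma \ref{lem:lowconsecutivepowers} and the non-negative range of Lemma \ref{lem:morehighpowers} overlap at the exponent $0$ and thus merge into one uninterrupted block, and pinning down the count $c\ge K-1$ in both branches of the case distinction on whether $\frac{a}{\frac{q}{2}-1}$ is an integer; one must also keep the $M$-alternant structure (evaluation points $\alpha_\ell^{-1}$) in force throughout rather than the $L$-structure. Everything substantive --- the code equalities $C(L,g)=C(M,h)$ and $C(L,g^2)=C(M,h^2)$, and the membership of the extreme-power evaluation vectors in the dual --- is already supplied by the preceding lemmas, so I do not anticipate any deeper obstacle.
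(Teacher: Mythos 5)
Your proposal is correct and follows essentially the same route as the paper: the first bound via $C(L,g)=C(M,h)$ and $\deg h = aq$, and the binary bound by combining $C(L,g)=C(L,g^2)=C(M,h^2)$ with the consecutive exponents $0\le j\le 2aq+2$ from Lemma \ref{lem:morehighpowers} and the negative exponents from Lemma \ref{lem:lowconsecutivepowers}. Your bookkeeping (shifting the block by $\beta^{c}$ to invoke the alternant bound and the case distinction on whether $\frac{a}{\frac{q}{2}-1}$ is an integer) is in fact more explicit than the paper's, which simply counts the parity checks.
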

\begin{proof} Lemma \ref{lem:TraceEquiv} established that $C(L, g)  = C(M, h)$. The degree of  $h$ is $q^{m-1}+q^{m-2}+\cdots + q$. Therefore $$d(C(L, g) ) \geq q^{m-1}+q^{m-2}+\cdots + q+1 .$$

In the case of binary Goppa codes, Corollary \ref{cor:TraceGoppaEquals} implies $C(L, g^2)  = C(M,h^2)$. Lemma \ref{lem:morehighpowers} implies $C(M, h^2)$ has $2(q^{m-1}+q^{m-2}+\cdots +q+1)+1$ consecutive powers of the form $ev_M\left(\frac{Y^{i}}{h(Y)^2} \right)$ as parity check equations. This implies $$d(C(M,h^2)) \geq 2(q^{m-1}+q^{m-2}+\cdots +q+1)+2.$$

Lemma \ref{lem:lowconsecutivepowers} implies there are an additional $\lfloor\frac{1+q+q^2+\cdots+q^{m-2}}{\frac{q}{2}-1}\rfloor$ parity check equations for $C(M,h^2)$ of the form $ev_M\left(\frac{Y^{-i}}{h(Y)^2} \right)$. As $C(L,g) = C(M, h^2)$ we obtain $$d(C(L,g)) \geq 2(q^{m-1}+q^{m-2}+ \cdots + q+1) + \lfloor\frac{1+q+q^2+\cdots+q^{m-2}}{\frac{q}{2}-1}\rfloor.$$ \end{proof}

\section{Further improvements for $m=3$}\label{sec3}
So far we have improved the bound on $d(C(L,g))$ in two ways. First we established $C(L, g) = C(L, g^2) = C(M, h^2)$ where $deg(h) > deg(g)$. This leads to an increase of the distance bound from $2q^{m-1}+1$ to $2(q+q^2+ \ldots +q^{m-2}+q^{m-1}) + 1$. Then we found additional consecutive powers $ev_{M}\left(\frac{X^j}{h(X)^2}\right)$ in the dual code $C(M,h^2)^\perp$ leading to $$d(C(L,g) \geq 2(1+q+q^2+ \ldots +q^{m-2}+q^{m-1}) + \lfloor\frac{1+q+q^2+\cdots+q^{m-2}}{\frac{q}{2}-1}\rfloor.$$ In the case $m=3$ and for most values of $q$ the latter improvement is $4$. The distance bound is in fact $d(C(L,\Trt(x))) \geq 2q^2+2q+6$. Now we improve this bound further for the case $m = 3$.

We shall prove that $ev_L\left( \frac{X^{2q^2+2q+3}}{\Trt^2} \right), ev_L\left( \frac{X^{2q^2+2q+5}}{\Trt^2} \right) \in C(L, \Trt^2)^\perp$. If true, then $C(L, \Trt^2)^\perp$ would contain $2q^2+2q+7$ consecutive powers, which implies the $d(C(L, \Trt)) \geq 2q^2+2q+8$. In Section \ref{sec2} we worked with $C(M,h^2)$ instead of $C(L,g^2)$ because that made it simpler to get the additional powers $ev_M\left(\frac{Y^j}{h(Y)^2}\right)$ where either $j \geq 2qa$ or $ev_M\left(\frac{Y^{-j}}{h(Y)^2}\right)$ where $i < \frac{1+q+q^2+\cdots+q^{m-2}}{\frac{q}{2}-1}$. We can find more consecutive powers but it involves more nuanced relations amongst the parity check functions and their $2$--powers. In this case it is much simpler to work with $g(x) = \Tr(x)$ directly.  We shall assume that $q$ is an even prime power such that $q \geq 8$. The Trace Goppa codes $C(L, \Tr)$ has dimension $0$ when $q = 2$ and has dimension $1$ when $q = 4$.

\begin{lemma}\label{lem:q3}

 $$ev_L\left(\frac{X^{2q^2+2q+3}}{\Trt^2}\right)^{(\frac{q}{2})} \in C(L, \Trt^2)^\perp $$
\end{lemma}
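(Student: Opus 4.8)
The plan is to simplify the $(q/2)$-power on the left-hand side and then write the resulting evaluation vector as a sum of three vectors, each of which is visibly a parity-check equation for $C(L,\Trt^2)$ --- two directly, and the third after the reciprocal change of variables of Lemma~\ref{lem:TraceEquiv}.

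First I would handle the $(q/2)$-power. Since $q$ is even, $q/2$ is a power of $2$, so the $(q/2)$-power of an evaluation vector is taken coordinatewise by $x\mapsto x^{q/2}$; moreover $(\Trt^2)^{q/2}=\Trt^q=\Trt$ as functions on $\F_{q^3}$ because $\Trt$ is $\F_q$-valued, and on $L$ every point is nonzero, so $\alpha^{q^3}=\alpha$ and powers of $X$ may be reduced modulo $q^3-1$. Since $(2q^2+2q+3)(q/2)=q^3+q^2+3q/2\equiv q^2+3q/2+1\pmod{q^3-1}$, this reduces the claim to
$$ev_L\!\left(\frac{X^{2q^2+2q+3}}{\Trt^2}\right)^{(q/2)}=ev_L\!\left(\frac{X^{q^2+3q/2+1}}{\Trt}\right)\in C(L,\Trt^2)^\perp.$$

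Next, rather than reducing $X^{q^2+3q/2+1}$ modulo $\Trt$, I would multiply numerator and denominator by $\Trt(X)=X+X^q+X^{q^2}$, obtaining the identity
$$\frac{X^{q^2+3q/2+1}}{\Trt}=\frac{X^{q^2+3q/2+2}+X^{q^2+5q/2+1}+X^{2q^2+3q/2+1}}{\Trt^2},$$
so that $ev_L\!\left(X^{q^2+3q/2+1}/\Trt\right)$ is the sum of the three vectors $ev_L\!\left(X^{j}/\Trt^2\right)$ for $j\in\{\,q^2+3q/2+2,\ q^2+5q/2+1,\ 2q^2+3q/2+1\,\}$. Because $q\ge 8$, the first two values of $j$ are at most $2q^2-1=\deg(\Trt^2)-1$, so those vectors are among the defining parity-check equations of the binary Goppa code $C(L,\Trt^2)$. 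For $j=2q^2+3q/2+1$, I would apply $X\mapsto Y^{-1}$: with $b=1+q+q^2$ one has $\Trt(Y^{-1})^2=Y^{-2b}h(Y)^2$, hence
$$ev_L\!\left(\frac{X^{2q^2+3q/2+1}}{\Trt^2}\right)=ev_M\!\left(\frac{Y^{2b-(2q^2+3q/2+1)}}{h^2}\right)=ev_M\!\left(\frac{Y^{q/2+1}}{h^2}\right),$$
which is a defining parity-check equation of $C(M,h^2)$ since $0\le q/2+1\le 2\deg(h)-1$, and $C(M,h^2)=C(L,\Trt^2)$ by Corollary~\ref{cor:TraceGoppaEquals}. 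Thus all three summands lie in $C(L,\Trt^2)^\perp$, and so does their sum.

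The step I expect to be the real obstacle is precisely the choice made in the previous paragraph. The natural first attempt is to reduce $X^{q^2+3q/2+1}$ \emph{modulo} $\Trt$, which leaves $ev_L(X^{3q/2+1})$ plus a tail of degree $<q^2$; but trying to place $ev_L(X^{3q/2+1})$ in the dual via $ev_L(X^{3q/2+1})=ev_L\bigl(X^{3q/2+1}\Trt(X)/\Trt(X)\bigr)$ merely regenerates the exponent $q^2+3q/2+1$, whose base-$q$ digits $(1,1,q/2+1)$ contain no zero, so none of the cyclic-shift reductions of Section~\ref{sec2} apply and the argument loops. Multiplying \emph{by} $\Trt$ rather than dividing is what breaks the loop: the one exponent that overshoots $2q^2-1$, namely $2q^2+3q/2+1$, is still small enough that after $X\mapsto Y^{-1}$ it is only the low parity check $Y^{q/2+1}/h(Y)^2$ for $C(M,h^2)$. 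Beyond this the only remaining verifications are the elementary bounds $q^2+5q/2+1\le 2q^2-1$ (which is where $q\ge 8$ is used) and $q/2+1\le 2\deg(h)-1$.
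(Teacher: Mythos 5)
Your proof is correct, and up to the final step it is the paper's own argument: you take the $(q/2)$-power to reduce the claim to $ev_L\bigl(X^{q^2+3q/2+1}/\Trt\bigr)\in C(L,\Trt^2)^\perp$, multiply numerator and denominator by $\Trt$, and arrive at exactly the same three exponents $q^2+3q/2+2$, $q^2+5q/2+1$, $2q^2+3q/2+1$, the first two of which are below $2q^2$ and hence defining parity checks. The only divergence is the treatment of the third term. The paper stays in $L$-coordinates: the $q$-ary digit of $2q^2+q+q/2+1$ at $q^1$ equals $1$, so its $(q^2)$-fold cyclic shift $q^2+(q/2+1)q+2$ is below $2q^2$, and since membership in $C(L,\Trt^2)^\perp$ is preserved under taking $2$-power roots, the term itself lies in the dual. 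You instead pass through the inversion $X\mapsto Y^{-1}$ and Corollary~\ref{cor:TraceGoppaEquals}, identifying the term with the low-degree parity check $ev_M\bigl(Y^{q/2+1}/h(Y)^2\bigr)$ of $C(M,h^2)$. Both justifications are sound; the paper's shift argument is self-contained and avoids invoking the $C(M,h^2)$ equivalence again, while your route is a natural continuation of Section~\ref{sec2} --- and, incidentally, your normalization $\Trt(Y^{-1})^2=Y^{-2b}h(Y)^2$ with $b=1+q+q^2$ is the correct one (the multiplier $Y^{aq}$ used in the paper's proof of Lemma~\ref{lem:TraceEquiv} is off by one from $Y^{b}$, though the stated code equality that you cite is unaffected).
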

\begin{proof} Recall that in this case we are working over the field $\F_{q^3}$ which implies $ev_L(X^{q^3}) = ev_L(X)$. The evaluation vector 
$$ev_L\left(\frac{X^{2q^2+2q+3}}{\Trt^2}\right)^{(\frac{q}{2})} = ev_L\left(\frac{X^{(2q^2+2q+3)\frac{q}{2}}}{(\Trt^2)^{\frac{q}{2}}}\right) = ev_L\left(\frac{X^{q^3+q^2+q +\frac{q}{2}}}{\Trt^q}\right).$$ As we are working over $\F_{q^3}$ and $\Trt$ takes values in $\F_q$ we get 
$$ev_L\left(\frac{X^{q^3+q^2+q +\frac{q}{2}}}{\Trt^q}\right)  =ev_L\left(\frac{X^{q^2+q +\frac{q}{2}+1}}{\Trt}\right) $$ 

We rewrite the quotient with $\Trt^2$ in the denominator. We obtain
  $$ev_L\left(\frac{X^{2q^2+2q+3}}{\Trt^2}\right)^{(\frac{q}{2})}=ev_L\left(\frac{X^{q^2+q +\frac{q}{2}+1}(X+X^q+X^{q^2})}{\Trt^2}\right) $$

Expanding the sum we obtain:

  $$ev_L\left(\frac{X^{2q^2+2q+3}}{\Trt^2}\right)^{(\frac{q}{2})}=ev_L\left(\frac{X^{q^2+q +\frac{q}{2}+2}  + X^{q^2+2q +\frac{q}{2}+1} + X^{2q^2+q +\frac{q}{2}+1}}{\Trt^2}\right) .$$ Each of the $X$ powers has a coefficient in its $q$--ary expansion equal to $1$, which implies each of the $X$ powers has a $q$ power which is less than $2q^2$. Therefore each of the $X$ powers is in $C(L, \Trt^2)^\perp$, which implies  $$ev_L\left(\frac{X^{2q^2+2q+3}}{\Trt^2}\right)^{(\frac{q}{2})} \in C(L, \Trt^2)^\perp $$ \end{proof}

\begin{lemma}\label{lem:q5} 
Let $q= 2^s$ where $s \geq 3$.
 $$ev_L\left(X^{2q+5}\right)^{(\frac{q}{2})} \in C(L, \Trt^2)^\perp $$
\end{lemma}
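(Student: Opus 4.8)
The goal is to show $ev_L(X^{2q+5})^{(q/2)} \in C(L,\Trt^2)^\perp$, which by taking $2$-powers is equivalent to $ev_L(X^{(2q+5)q/2}) = ev_L(X^{q^2 + 2q + q/2}) \in C(L,\Trt^2)^\perp$ after reducing the exponent modulo $q^3 - 1$ (here $(2q+5)\cdot q/2 = q^2 + \tfrac{5q}{2} = q^2 + 2q + q/2$, all below $q^3-1$). The strategy mirrors Lemma \ref{lem:lowconsecutivepowers} and Lemma \ref{lem:q3}: I would multiply and divide by $\Trt^2$ so the vector becomes $ev_L\!\left(\frac{X^{e}\,\Trt^2}{\Trt^2}\right)$ for a suitable exponent $e$, expand $\Trt^2 = (X+X^q+X^{q^2})^2 = X^2 + X^{2q} + X^{2q^2}$ (since $q$ is even), and check that every resulting monomial $X^{e+2}, X^{e+2q}, X^{e+2q^2}$ either has degree below $2\deg(h) = 2(q^2+q+1)$ — hence lies in $C(L,\Trt^2)^\perp$ by definition — or reduces, via a $q$-power (i.e.\ a cyclic shift of the $q$-ary digit string, Lemma 6) to such a monomial, using that at least one $q$-ary digit is small.

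First I would pin down the exponent. Starting from $ev_L(X^{2q+5})^{(q/2)}$, raising to the $q/2$ power sends the exponent to $(2q+5)\cdot\frac q2 = q^2 + \tfrac{5q}{2}$; writing $\tfrac{5q}{2} = 2q + \tfrac q2$ gives exponent $q^2 + 2q + \tfrac q2$, whose $q$-ary digits are $(\tfrac q2, 2, 1)$ reading from the units place (valid since $q\ge 8$ forces $\tfrac q2 \le q-1$). This is already $< 2q^2$, so $ev_L\!\left(X^{\,q^2+2q+q/2}\right)$ is a legitimate parity-check position for $C(L,\Trt^2)$ only if $q^2+2q+q/2 < 2q^2+2q+2$, which holds. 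But that shows the monomial itself is in the dual, so in fact there may be nothing subtle once the exponent is computed. I would double-check whether the intended content is instead $ev_L(X^{-(2q+5)})$ or a quotient $\frac{X^{2q+5}}{\Trt^2}$ with a negative effective degree; the surrounding text ("negative powers of $X$", "$2q^2+2q+5$") suggests the real claim feeds into showing $ev_L\!\left(\frac{X^{2q^2+2q+5}}{\Trt^2}\right) \in C(L,\Trt^2)^\perp$ via Lemma \ref{lem:Xpow}, so $ev_L(X^{2q+5})$ should be read as the "polynomial part" left over after subtracting $\frac{X^{2q^2+2q+5}}{\Trt^2}$ from $\frac{X^{2q+5}\Trt^2}{\Trt^2}$, i.e.\ the claim is about $ev_L$ of a genuine low-degree quantity times lower-order correction terms.

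Accordingly the clean line of argument I would write is: consider $X^{2q+5}\,\Trt^2 = X^{2q+5}(X^2 + X^{2q} + X^{2q^2}) = X^{2q+7} + X^{4q+5} + X^{2q^2+2q+5}$. The first two monomials have degree $< 2q^2 \le 2\deg(h)$ (using $q \ge 8$), so their $ev_L$ composed with $\frac{1}{\Trt^2}$ lies in $C(L,\Trt^2)^\perp$ by definition. Hence $ev_L(X^{2q+5}) - ev_L\!\left(\frac{X^{2q^2+2q+5}}{\Trt^2}\right) \in C(L,\Trt^2)^\perp$, which is exactly the reduction needed; then apply the $q/2$-power (allowable because $C(L,\Trt^2)^\perp$ is closed under $2$-powers) to land the statement as phrased. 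The one genuine thing to verify carefully is the digit bookkeeping after raising to the $q/2$ power and reducing mod $q^3-1$, i.e.\ that the shifted exponents land in $[0,2q^2)$ or reduce there — this is the step most prone to an off-by-one, and it is where I'd be most careful; everything else is the same multiply-by-$\Trt^2$-and-expand trick already used twice above.
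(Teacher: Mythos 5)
There is a genuine gap, and it sits exactly where the lemma's real content lies. First, your preliminary reading rests on a misconception: the parity checks of $C(L,\Trt^2)$ coming from the definition are the vectors $ev_L\left(\frac{X^i}{\Trt^2}\right)$ with $0 \leq i \leq 2q^2-1$, not plain monomial evaluations $ev_L(X^e)$. So computing $(2q+5)\frac{q}{2} = q^2+2q+\frac{q}{2} < 2q^2$ does not place $ev_L\left(X^{q^2+2q+q/2}\right)$ in the dual, and there is indeed something nontrivial to prove. Second, your ``clean line of argument'' only shows that $ev_L\left(X^{2q+5}\right)$ and $ev_L\left(\frac{X^{2q^2+2q+5}}{\Trt^2}\right)$ differ by an element of $C(L,\Trt^2)^\perp$ (expand $X^{2q+5}\Trt^2$ and absorb the two low-degree terms). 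That equivalence is precisely what Lemma \ref{lem:Xpow} already provides and what the Theorem already uses; it proves membership for neither vector, so as a proof of Lemma \ref{lem:q5} it is circular. Nor can the high-degree term be rescued by Frobenius shifts: the $q$-ary digits of $2q^2+2q+5$ are $(5,2,2)$, and those of $3q^2+2q+\frac{q}{2}$ (the bad term you get if you instead multiply $X^{q^2+2q+q/2}$ by $\Trt^2$) are $\left(\frac{q}{2},2,3\right)$ --- all digits at least $2$, so no cyclic shift lands below $2q^2$.

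The paper's proof supplies the missing idea: it exhibits $ev_L\left(X^{2q+5}\right)^{(q/2)} = ev_L\left(X^{q^2+2q+q/2}\right)$ as an explicit characteristic-$2$ telescoping sum $c_1+c_2+c_3+c_4+c_5$ of Frobenius powers of admissible parity checks. Here $c_4, c_5$ handle the two easy monomials $X^{q^2+2q+q/2+2}$ and $X^{q^2+4q+q/2}$, while the problematic monomial $X^{3q^2+2q+q/2}$ is produced by $c_1 = ev_L\left(\frac{X^{6q+3}}{\Trt^2}\right)^{(q/2)}$ --- whose denominator becomes $\Trt$ because $\Trt^q=\Trt$, and which is then rewritten over $\Trt^2$ by multiplying by $\Trt$ and expanded into three terms --- together with $c_2, c_3$, chosen as $q^2$-powers of checks with exponents below $2q^2$, which cancel the two unwanted terms of that expansion. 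Some construction of this kind, manufacturing the high-degree term from powers of genuinely admissible checks, is the actual substance of the lemma and is absent from your proposal.
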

\begin{proof}

We shall write the vector $ev_L\left(X^{2q+5}\right)^{(\frac{q}{2})}$ as a combination of $2$--powers of evaluations of $ev_L\left(\frac{X^i}{\Trt^2}\right)$ where $0 \leq i < 2q^2$.

Since each power $X^i$ satisfies $0 \leq i < 2q^2$ and $q \geq 8$ the following evaluation vectors are elements of $C(L, \Trt^2)^\perp$: $$c_1 = ev_L\left(  \frac{X^{6q+3}}{\Trt^2}\right)^{(\frac{q}{2})}, c_2 = ev_L\left(  \frac{X^{q^2 + (\frac{q}{2}+1)q+3}}{\Trt^2}\right)^{(q^2)}, c_3 = ev_L\left(  \frac{X^{q^2 + \frac{q}{2}q+4}}{\Trt^2}\right)^{(q^2)}$$ $$c_4 = ev_L\left(  \frac{X^{q^2 + 2q + \frac{q}{2}+2}}{\Trt^2}\right), c_5  = ev_L\left(  \frac{X^{q^2 +4q+ \frac{q}{2}}}{\Trt^2}\right)$$

Taking the $2$ powers inside of the evaluation, we obtain:

$$c_1 = ev_L\left(  \frac{X^{3q^2+q + \frac{q}{2}}}{\Trt}\right), c_2 = ev_L\left(  \frac{X^{3q^2 + q + \frac{q}{2}+1} }{\Trt^2}\right), c_3 = ev_L\left(  \frac{X^{4q^2+q+\frac{q}{2}}}{\Trt^2}\right)$$ $$c_4 = ev_L\left(  \frac{X^{q^2 + 2q + \frac{q}{2}+2}}{\Trt^2}\right), c_5  = ev_L\left(  \frac{X^{q^2 +4q+ \frac{q}{2}}}{\Trt^2}\right)$$



Now we rewrite $c_1$ the evaluation of a rational function with $\Trt^2$ in the denominator by multiplying both sides by $\Trt$.


$$c_1 = ev_L\left(  \frac{X^{3q^2+q + \frac{q}{2}}(X+X^q+X^{q^2})}{\Trt^2}\right), c_2 = ev_L\left(  \frac{X^{3q^2 + q + \frac{q}{2}+1} }{\Trt^2}\right)$$ $$c_3 = ev_L\left(  \frac{X^{4q^2+q+\frac{q}{2}}}{\Trt^2}\right), c_4 = ev_L\left(  \frac{X^{q^2 + 2q + \frac{q}{2}+2}}{\Trt^2}\right), c_5  = ev_L\left(  \frac{X^{q^2 +4q+ \frac{q}{2}}}{\Trt^2}\right)$$




We expand $c_1$ as  $$ c_1 = ev_L\left(  \frac{X^{3q^2+q + \frac{q}{2}+1}}{\Trt^2}\right) +
ev_L\left(  \frac{X^{3q^2+2q + \frac{q}{2}}}{\Trt^2}\right)+ ev_L\left(  \frac{X^{4q^2+q + \frac{q}{2}} }{\Trt^2}\right)$$

As we are working in characteristic $2$ if we add $c_1+c_2+c_3+c_4+c_5$ we obtain $$ c_1+ c_2 +c_3+c_4 +c_5 = ev_L\left(  \frac{ X^{q^2 +2q +\frac{q}{2}}(X^2 + X^{2q} + X^{2q^2})   }{\Trt^2}\right) = ev_L\left(X^{2q+5}\right)^{(\frac{q}{2})} $$





As we have written $ev_L\left(X^{2q+5}\right)^{(\frac{q}{2})}$ as the sum of elements of $C(L, \Trt^2)^\perp$, the Lemma follows. \end{proof}

We finish this article proving $d(C(L, \Trt)) \geq 2q^2+2q+8$.


\begin{theorem}

The minimum distance of $C(L, \Trt^2)$ is at least $2q^2+2q+8$.

\end{theorem}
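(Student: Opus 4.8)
The plan is to bound $d\bigl(C(L,\Trt^2)\bigr)$ via the Alternant bound (the Proposition just before Section~\ref{sec2}): it is enough to produce $2q^2+2q+7$ \emph{consecutive} integers $j$ with $ev_L\!\left(\frac{X^{j}}{\Trt^2}\right)\in C(L,\Trt^2)^\perp$, for then $d\bigl(C(L,\Trt^2)\bigr)\ge 2q^2+2q+8$. From Section~\ref{sec2} the range $0\le j\le 2q^2+2q+2$ is already available: Corollary~\ref{cor:TraceGoppaEquals} gives $C(L,\Trt^2)=C(M,h^2)$, and Lemmas~\ref{lem:morehighpowers} and~\ref{lem:lowconsecutivepowers}, read back through the substitution $x\mapsto x^{-1}$ of Lemma~\ref{lem:TraceEquiv}, cover exactly this range. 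The two even exponents $2q^2+2q+4=2(q^2+q+2)$ and $2q^2+2q+6=2(q^2+q+3)$ then come for free: each is the square of a vector $ev_L\!\left(\frac{X^{q^2+q+e}}{\Trt}\right)$ with $e\in\{2,3\}$, and multiplying numerator and denominator by $\Trt$ and expanding --- exactly the computation in the proof of Lemma~\ref{lem:Xpow} --- writes that vector as a sum of vectors $ev_L\!\left(\frac{X^{k}}{\Trt^2}\right)$ whose exponents $k$ all lie in $[0,\,2q^2+2q+2]$; squaring preserves membership in $C(L,\Trt^2)^\perp$ because the code is binary. So the theorem comes down to the two missing exponents $j=2q^2+2q+3$ and $j=2q^2+2q+5$.

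For these I would invoke Lemmas~\ref{lem:q3} and~\ref{lem:q5}. Each of them places a $\tfrac{q}{2}$-th power of the desired vector in $C(L,\Trt^2)^\perp$, and since $q=2^{s}$ the map $x\mapsto x^{q/2}=x^{2^{s-1}}$ is a bijection of $\F_{q^3}=\F_{2^{3s}}$ whose inverse $x\mapsto x^{2^{2s+1}}$ is again an iterate of $x\mapsto x^{2}$; applying that inverse to a vector of $C(L,\Trt^2)^\perp$ keeps it there. Hence Lemma~\ref{lem:q3} yields $ev_L\!\left(\frac{X^{2q^2+2q+3}}{\Trt^2}\right)\in C(L,\Trt^2)^\perp$ outright, while Lemma~\ref{lem:q5} yields $ev_L\!\left(X^{2q+5}\right)\in C(L,\Trt^2)^\perp$. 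From the latter, one last clearing of a denominator finishes the job: $ev_L\!\left(X^{2q+5}\right)=ev_L\!\left(\tfrac{\Trt^2X^{2q+5}}{\Trt^2}\right)$, and $\Trt^2X^{2q+5}=X^{2q^2+2q+5}+X^{4q+5}+X^{2q+7}$ with $4q+5$ and $2q+7$ both below $2q^2$, so $ev_L\!\left(\frac{X^{2q^2+2q+5}}{\Trt^2}\right)$ equals $ev_L\!\left(X^{2q+5}\right)$ plus two vectors already in $C(L,\Trt^2)^\perp$, hence lies in $C(L,\Trt^2)^\perp$.

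Assembling the pieces, $ev_L\!\left(\frac{X^{j}}{\Trt^2}\right)\in C(L,\Trt^2)^\perp$ for every $j$ with $0\le j\le 2q^2+2q+6$, a block of $2q^2+2q+7$ consecutive powers, and the Alternant bound gives $d\bigl(C(L,\Trt^2)\bigr)\ge 2q^2+2q+8$. The main obstacle is the exponent bookkeeping in the ``free'' steps and in the final clearing of denominators: one has to verify that every auxiliary power produced when $\Trt$ or $\Trt^2$ is removed from a denominator really lands below $2q^2$ or inside the already-secured interval $[0,\,2q^2+2q+2]$. This is precisely where the standing hypothesis $q\ge 8$ is needed (it is also what Lemma~\ref{lem:q5} requires): for small $q$ the widely spaced support of $\Trt=X^{q^2}+X^{q}+X$ is too narrow for these exponents to fall into range, and the argument collapses.
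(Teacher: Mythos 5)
Your proposal is correct and takes essentially the same route as the paper: it assembles $2q^2+2q+7$ consecutive exponents over $\Trt^2$ in $C(L,\Trt^2)^\perp$, using Lemmas \ref{lem:q3} and \ref{lem:q5} (undone via closure under $2$-powers) for $2q^2+2q+3$ and $2q^2+2q+5$, squaring and denominator-clearing for the even exponents, and then the Alternant bound. The only cosmetic difference is that you import the base range $0\le j\le 2q^2+2q+2$ from Lemmas \ref{lem:morehighpowers} and \ref{lem:lowconsecutivepowers} through the inversion map and Corollary \ref{cor:TraceGoppaEquals}, whereas the paper re-derives that range directly in $L$-coordinates by the $q$-ary digit/Frobenius argument together with Lemma \ref{lem:Xpow}.
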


\begin{proof}

By the definition of the Goppa code $$ev_L\left( \frac{X^i}{\Trt^2}  \right) \in C(L, \Trt^2)^\perp, 0 \leq i \leq 2q^2-1.$$

Let $2q^2 \leq i \leq 2q^2+2q+1$. Writing $i$ in base $q$ we obtain $i = 2q^2 + i_1q+i_0$, where  $0 \leq i_0, i_1 \leq q-1$. The condition $i \leq 2q^2+2q+1$ implies either $i_1 \leq 1 $ or $i_0 \leq 1$. Thus either $iq \mod q^3-1$ or $iq^2 \mod q^3-1$ are smaller than $2q^2$. Therefore either $$ev_L\left( \frac{X^i}{\Trt^2}  \right)^{(q)} \makebox{ or } ev_L\left( \frac{X^i}{\Trt^2}  \right)^{(q^2)} \in C(L, \Trt^2)^\perp, 2q^2 \leq i \leq 2q^2+2q+1.$$ Since $C(L, \Trt^2)^\perp$ is closed under powers of $2$ it follows that $$ev_L\left( \frac{X^i}{\Trt^2}  \right) \in C(L, \Trt^2)^\perp, 0 \leq i \leq 2q^2+2q+1.$$

Lemma \ref{lem:Xpow} implies $$ev_L(X^{2q+1}), ev_L(X^{q+1}), ev_L(X^{q+2}) , ev_L(X^{q+3}) \in C(L, \Trt^2)^\perp.$$    
Lemma \ref{lem:q3} implies $$ev_L(X^{2q+3}) \in C(L, \Trt^2)$$ and Lemma \ref{lem:q5} implies $$ev_L(X^{2q+5}) \in C(L, \Trt^2).$$ Because $C(L, \Trt^2)$ is closed under $2$--powers, then $$ev_L(X^{2q+2}), ev_L(X^{2q+4}), ev_L(X^{2q+6}) \in C(L, \Trt^2)^\perp.$$ Lemma \ref{lem:Xpow} implies that $$ev_L\left( \frac{X^i}{\Trt^2} \right) \in C(L, \Trt^2)^\perp, 0 \leq i \leq 2q^2+2q+6.$$ There are $2q^2+2q+7$ consecutive powers in $C(L, \Trt^2)^\perp$ which implies $$d(C(L,\Trt^2) \geq 2q^2+2q+8.$$\end{proof}

\section{Conclusion}

In this article we improve the minimum distance bound of Goppa codes of the class $C(L, \Tr(x))$ in two different ways. First, we have proven that a Goppa code of the form $C(L, \Tr(x))$ is equivalent to a Goppa code of the form $C(M, \Tr(x^{a}))$ where $M = \{\beta \ : \ \beta^{-1} \in L  \}$ and $a = 1+q+\cdots + q^{m-2}$. As the degree of $\Tr(x^{a})$  is much larger than the degree of $\Tr(x)$, the minimum distance bound is significantly improved from $q^{m-1}+1$ to $q^{m-1}+q^{m-2}+ \cdots + q+1$. In the binary case the improvement is from $2q^{m-1}+1$ to $2(q^{m-1}+q^{m-2}+ \cdots + q+1)+1$.  For the binary Goppa codes with $m = 3$ we have further improved the minimum distance bound by finding additional parity check equations corresponding to additional consecutive powers. In this case the distance bound increases from $2q^2+2q+4$ to $2q^2+2q+8$.



\end{document}